\documentclass[conference]{IEEEtran}

\usepackage{graphicx,fullpage,amsmath,amssymb,mathrsfs,color}
\usepackage{bold-extra}
\usepackage{times}
\usepackage{amsfonts,latexsym,graphics}
\usepackage{amsthm,amstext,url,paralist}
\usepackage{tikz}
\usetikzlibrary{arrows,automata,shapes,calc}
\usepackage[colorlinks=true,urlcolor=blue]{hyperref}
\usetikzlibrary{intersections}
\newtheorem{theorem}{Theorem}

\newtheorem{corollary}{Corollary}
\newtheorem{remark}{Remark}

\usepackage{graphicx}
\usepackage{caption}
\usepackage{subcaption}
\usepackage{pgfplots}
\pgfplotsset{width=10cm,compat=newest}

\usepackage{textcomp}
\usetikzlibrary{shapes,arrows}
\usepackage{cite}
\usepackage{pgfplots}

\pgfplotsset{compat=newest}

\allowdisplaybreaks[3]
\tikzset{
  sum/.style      = {draw, circle,inner sep=0.2mm, node distance = 2cm}, 
  input/.style    = {coordinate}, 
  output/.style   = {coordinate} 

}
\newcommand{\suma}{\Large$+$}
\newcommand{\dott}{\Large$.$}
\newcommand{\bx}{\mathbf{x}}
\newcommand{\bz}{\mathbf{z}}
\newcommand{\by}{\mathbf{y}}
\newcommand{\bs}{\mathbf{s}}

\newcommand{\bw}{\mathbf{w}}

\newcommand{\cF}{\cal{F}}
\newcommand{\cN}{\cal{N}}
\newcommand{\cA}{\cal{A}}

\newcommand{\cJ}{\cal{J}}
\newcommand{\cI}{\cal{I}}
\newcommand{\cB}{\cal{B}}
\newcommand{\cM}{\cal{M}}
\newcommand{\cS}{\cal{S}}
\newcommand{\cMI}{\cal{MI}}
\newcommand{\cOI}{\cal{OI}}

\makeatletter
\newcommand{\vast}{\bBigg@{4}}
\newcommand{\Vast}{\bBigg@{3.5}}
\makeatother

\begin{document}
\title{On the Sum Capacity of Many-to-one and One-to-many Gaussian Interference Channels }
\author{\IEEEauthorblockN{Abhiram Gnanasambandam, Ragini Chaluvadi, Srikrishna Bhashyam}
\IEEEauthorblockA{Department of Electrical Engineering,
Indian Institute of Technology Madras,
Chennai-600036, India \\
Email:abhiram.g94@gmail.com,\{ee14d404,skrishna\}@ee.iitm.ac.in}
}
\maketitle
\begin{abstract}
We obtain new sum capacity results for the Gaussian many-to-one and one-to-many interference channels in channel parameter regimes where the sum capacity was known only up to a constant gap. {\em Simple Han-Kobayashi (HK) schemes}, i.e., HK schemes with Gaussian signaling, no time-sharing, and no common-private power splitting, achieve sum capacity under the channel conditions for which the new results are obtained. To obtain sum capacity results, we show that genie-aided upper bounds match the achievable sum rate of  simple HK schemes under certain channel conditions.  
\end{abstract}

\allowdisplaybreaks

\section{Introduction}
The $K$-user Gaussian Interference channel (IC) has $K$ distinct
transmit-receive pairs that interfere with each other. The capacity
region or even the sum capacity are not known in general. The sum
capacity of the Gaussian IC is known under some channel conditions
\cite{rag1,ShaKraChe08,rag2,rag5,annvee09}. In \cite{rag1}, the
capacity region and sum capacity for the 2-user IC were determined
under strong interference conditions. In
\cite{ShaKraChe08,rag2,rag5,annvee09}, the sum capacity of the
$K$-user Gaussian IC was obtained under {\em noisy} interference
conditions. Under these conditions, Gaussian signaling and treating
interference as noise at each receiver achieves sum capacity. In
\cite{rag2}, the sum capacity of the 2-user Gaussian IC under mixed
interference conditions was also obtained.

The many-to-one Gaussian IC and one-to-many Gaussian IC are special
cases of the Gaussian IC where only one receiver experiences
interference or only one transmitter causes interference. Even for these simpler topologies, exact capacity results
are hard to obtain. The one-to-many IC and many-to-one IC were studied
in \cite{rag6,rag7,annvee09,RP10,RangaP,PraBhaCho14}. In \cite{rag6,rag7},
approximate capacity and degrees of freedom results are obtained for the many-to-one and one-to-many ICs. The sum capacity under {\em noisy} interference conditions is obtained for the many-to-one and one-to-many Gaussian ICs in \cite{annvee09,RP10}. The same results can also be obtained as a special case of the result in \cite{ShaKraChe08}. Recently, for the many-to-one Gaussian IC, channel conditions under which Gaussian signalling and a combination of treating interference as noise and interference decoding is sum rate optimal were obtained in \cite{RangaP}. In \cite{Tun11}, sum capacity was obtained for $K$-user Gaussian $Z$-like interference channels under some channel conditions. In both \cite{RangaP} and \cite{Tun11}, a {\em successive decoding} strategy where interference is decoded before decoding the desired signal is considered.
For the symmetric many-to-one IC, structured lattice codes were shown to achieve sum capacity under some strong interference conditions in \cite{zhu15}. Other special cases of the Gaussian IC, namely the cyclic IC and cascade IC were studied in \cite{cyclic,LiuErk11}.

In this paper, we obtain new sum capacity results for Gaussian many-to-one and one-to-many ICs. First, by careful Fourier-Motzkin elimination, we obtain the Han-Kobayashi (HK) achievable rate region for the $K$-user Gaussian many-to-one and one-to-many channels in simplified form, i.e., only in terms of the $K$ rates $R_1$, $R_2$, $\hdots$, $R_K$. Then, we focus on {\em simple} HK schemes with Gaussian signaling, no timesharing, and no common-private power splitting. We show that genie-aided sum capacity upper bounds {\em match} the achievable sum rates of simple HK schemes under some channel conditions.  We also discuss how the genie-aided bounds used in this paper differ from the bounds in \cite{Nam15,Nam15arx} for the K-user many-to-one Gaussian IC. Overall, we obtain new sum capacity results for a larger subset of possible channel conditions than currently known in exisiting literature in \cite{annvee09,RP10,RangaP,zhu15,Tun11}. In \cite{annvee09,RP10} only the case when all the interference is treated as noise was considered. In \cite{zhu15}, only the symmetric many-to-one IC was considered. In \cite{Tun11,RangaP}, only a successive decoding strategy was considered. Furthermore, the conditions under which sum capacity is achieved in \cite{Tun11} are not obtained explicitly in terms of the channel parameters. We allow joint decoding of the desired and interfering signals as well and obtain conditions explicitly in terms of the channel parameters. 
In the simple HK schemes considered in our paper, either the interference from a particular transmitter is decoded fully or gets treated as noise. For the many-to-one case, we consider schemes where $k$ out of $K$-1 interfering signals are decoded at receiver 1. For the one-to-many case, we consider schemes where $k$ out of $K$-1 receivers decode the interfering signal.

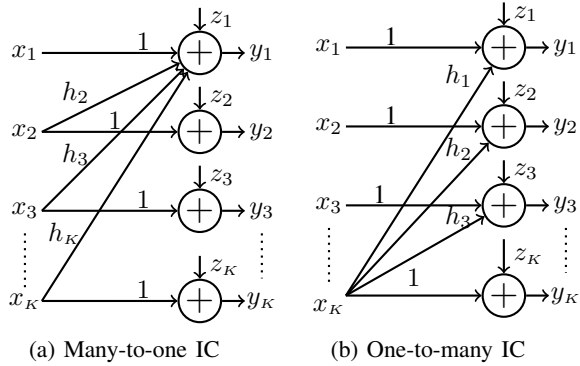
\begin{figure}[h!]
\begin{subfigure}[t]{0.2\textwidth}
\begin{tikzpicture}[scale=0.15,auto, thick, node distance=1cm]
\draw 
      node at (10,0) [sum, name=suma1] {\suma}
      node at (10,-7) [sum, name=suma2] {\suma}
      node at (10,-14) [sum, name=suma3] {\suma}
      node at (10,-22) [sum, name=suma4] {\suma};
\draw[->](-4,0) -- (suma1);
\draw[->](-4,-7) --(suma2);
\draw[->](-4,-14) -- (suma3);
\draw[->](-4,-22) --(suma4);
\draw[->](-4,-7) -- (suma1);
\draw[->](-4,-14) -- (suma1);
\draw[->](-4,-22) -- (suma1);

\draw[->] (10,4) -- node{$z_1$}(suma1);
\draw[->] (10,-3) -- node{$z_2$}(suma2);
\draw[->] (10,-10) -- node{$z_3$}(suma3);
\draw[->] (10,-18) -- node{$z_{\scriptscriptstyle{K}}$}(suma4);

\draw[->] (suma1) -- (14,0); 
\draw[->] (suma2) -- (14,-7);
\draw[->] (suma3) -- (14,-14);
\draw[->] (suma4) -- (14,-22);
\draw 

  node at (-1,-3.5) {$h_{2}$}
  node at (-1,-9) {$h_{3}$}
  node at (-2,-16) {$h_{\scriptscriptstyle{K}}$}
  node at (2.5,-6) {$1$}
  node at (5,1) {$1$}
  node at (5,-21) {$1$}
  node at (15.5,0) {$y_1$}
  node at (15.5,-7) {$y_2$}
  node at (15.5,-14) {$y_3$}
  node at (15.5,-22) {$y_{\scriptscriptstyle{K}}$}
  node at (-5.5,0) {$x_1$}
  node at (-5.5,-7) {$x_2$}
  node at (-5.5,-14) {$x_3$}
  node at (-5.5,-22) {$x_{\scriptscriptstyle{K}}$}
  node at (5,-13) {$1$};
  \draw[-,dotted] (15.5,-16) -- (15.5,-20);
    \draw[-,dotted] (-5.5,-16) -- (-5.5,-21);
\end{tikzpicture}
\caption{Many-to-one IC}
\label{fig:one-to-many}
\end{subfigure}\ \ \ \ \ \ 
\begin{subfigure}[t]{0.2\textwidth}
\centering
\begin{tikzpicture}[scale=0.15,auto, thick, node distance=2cm]
\draw 
      node at (12,0) [sum, name=suma1] {\suma}
      node at (12,-7) [sum, name=suma2] {\suma}
      node at (12,-14) [sum, name=suma3] {\suma}
      node at (12,-22) [sum, name=suma4] {\suma};
\draw[->](-2,0) -- (suma1);
\draw[->](-2,-7) --(suma2);
\draw[->](-2,-14) -- (suma3);
\draw[->](-2,-22) -- node {$1$}(suma4);
\draw[->](-2,-22) -- (suma1);
\draw[->](-2,-22) -- (suma2);
\draw[->](-2,-22) -- (suma3);

\draw[->] (12,4) -- node{$z_1$}(suma1);
\draw[->] (12,-3) -- node{$z_2$}(suma2);
\draw[->] (12,-10) -- node{$z_3$}(suma3);
\draw[->] (12,-17) -- node{$z_{\scriptscriptstyle{K}}$}(suma4);

\draw[->] (suma1) -- (16,0); 
\draw[->] (suma2) -- (16,-7);
\draw[->] (suma3) -- (16,-14);
\draw[->] (suma4) -- (16,-22);
\draw 
  node at (8,-2.5) {$h_{1}$}
  node at (8,-9) {$h_{2 }$}
  node at (8,-15) {$h_{3 }$}
  node at (2,-6) {1}
  node at (2,1) {1}
  node at (17.5,0) {$y_1$}
  node at (17.5,-7) {$y_2$}
  node at (17.5,-14) {$y_3$}
  node at (17.5,-22) {$y_{\scriptscriptstyle{K}}$}
  node at (-3.5,0) {$x_1$}
  node at (-3.5,-7) {$x_2$}
  node at (-3.5,-14) {$x_3$}
  node at (-3.5,-23) {$x_{\scriptscriptstyle{K}}$}
  node at (1,-13) {1};
  \draw[-,dotted] (17.5,-16) -- (17.5,-20);
    \draw[-,dotted] (-3.5,-16) -- (-3.5,-21);
\end{tikzpicture}
\caption{One-to-many IC}
\end{subfigure}
\caption{Channel Models in standard form}
\label{Model_fig}
\end{figure}
The channel models (in standard form) for the Gaussian many-to-one and one-to-many ICs are shown in Fig. \ref{Model_fig}. As an illustration of the new results in this paper, the channel conditions under which sum capacity results are obtained for the 3-user many-to-one and one-to-many ICs are shown in Figs. \ref{fig1} and \ref{fig2}. In the figures, the shaded regions represent the new regions where sum capacity is determined in this paper. 
\noindent
\begin{figure}

\centering
\resizebox{5.5cm}{5.5cm}{
\begin{tikzpicture}
\begin{axis}[xlabel={$|h_2|$},
    ylabel={$|h_3|$},
    xmin=0, xmax=4,
    ymin=0, ymax=4,
    xtick={0,1,2,3,4},
    ytick={0,1,2,3,4},
]

\addplot[ color=blue]
    coordinates {
    (1.732051e+00,0)(1.732109e+00,1.000000e-02)(1.732282e+00,2.000000e-02)(1.732570e+00,3.000000e-02)(1.732974e+00,4.000000e-02)(1.733494e+00,5.000000e-02)(1.734128e+00,6.000000e-02)(1.734878e+00,7.000000e-02)(1.735742e+00,8.000000e-02)(1.736721e+00,9.000000e-02)(1.737815e+00,1.000000e-01)(1.739023e+00,1.100000e-01)(1.740345e+00,1.200000e-01)(1.741781e+00,1.300000e-01)(1.743330e+00,1.400000e-01)(1.744993e+00,1.500000e-01)(1.746768e+00,1.600000e-01)(1.748657e+00,1.700000e-01)(1.750657e+00,1.800000e-01)(1.752769e+00,1.900000e-01)(1.754993e+00,2.000000e-01)(1.757328e+00,2.100000e-01)(1.759773e+00,2.200000e-01)(1.762328e+00,2.300000e-01)(1.764993e+00,2.400000e-01)(1.767767e+00,2.500000e-01)(1.770650e+00,2.600000e-01)(1.773640e+00,2.700000e-01)(1.776739e+00,2.800000e-01)(1.779944e+00,2.900000e-01)(1.783255e+00,3.000000e-01)(1.786673e+00,3.100000e-01)(1.790196e+00,3.200000e-01)(1.793823e+00,3.300000e-01)(1.797554e+00,3.400000e-01)(1.801388e+00,3.500000e-01)(1.805325e+00,3.600000e-01)(1.809365e+00,3.700000e-01)(1.813505e+00,3.800000e-01)(1.817746e+00,3.900000e-01)(1.822087e+00,4.000000e-01)(1.826527e+00,4.100000e-01)(1.831065e+00,4.200000e-01)(1.835702e+00,4.300000e-01)(1.840435e+00,4.400000e-01)(1.845264e+00,4.500000e-01)(1.850189e+00,4.600000e-01)(1.855209e+00,4.700000e-01)(1.860323e+00,4.800000e-01)(1.865529e+00,4.900000e-01)(1.870829e+00,5.000000e-01)(1.876220e+00,5.100000e-01)(1.881701e+00,5.200000e-01)(1.887273e+00,5.300000e-01)(1.892934e+00,5.400000e-01)(1.898684e+00,5.500000e-01)(1.904521e+00,5.600000e-01)(1.910445e+00,5.700000e-01)(1.916455e+00,5.800000e-01)(1.922550e+00,5.900000e-01)(1.928730e+00,6.000000e-01)(1.934994e+00,6.100000e-01)(1.941340e+00,6.200000e-01)(1.947768e+00,6.300000e-01)(1.954277e+00,6.400000e-01)(1.960867e+00,6.500000e-01)(1.967537e+00,6.600000e-01)(1.974285e+00,6.700000e-01)(1.981111e+00,6.800000e-01)(1.988014e+00,6.900000e-01)(1.994994e+00,7.000000e-01)(2.002049e+00,7.100000e-01)(2.009179e+00,7.200000e-01)(2.016383e+00,7.300000e-01)(2.023660e+00,7.400000e-01)(2.031010e+00,7.500000e-01)(2.038431e+00,7.600000e-01)(2.045923e+00,7.700000e-01)(2.053485e+00,7.800000e-01)(2.061116e+00,7.900000e-01)(2.068816e+00,8.000000e-01)(2.076584e+00,8.100000e-01)(2.084418e+00,8.200000e-01)(2.092319e+00,8.300000e-01)(2.100286e+00,8.400000e-01)(2.108317e+00,8.500000e-01)(2.116412e+00,8.600000e-01)(2.124571e+00,8.700000e-01)(2.132792e+00,8.800000e-01)(2.141074e+00,8.900000e-01)(2.149419e+00,9.000000e-01)(2.157823e+00,9.100000e-01)(2.166287e+00,9.200000e-01)(2.174810e+00,9.300000e-01)(2.183392e+00,9.400000e-01)(2.192031e+00,9.500000e-01)(2.200727e+00,9.600000e-01)(2.209480e+00,9.700000e-01)(2.218288e+00,9.800000e-01)(2.227151e+00,9.900000e-01)(2.236068e+00,1)
     (3,1) (6,1)
    }\closedcycle;

\addplot[
    color=blue]
    coordinates {   
    (0,1.732051e+00)(1.000000e-02,1.732109e+00)(2.000000e-02,1.732282e+00)(3.000000e-02,1.732570e+00)(4.000000e-02,1.732974e+00)(5.000000e-02,1.733494e+00)(6.000000e-02,1.734128e+00)(7.000000e-02,1.734878e+00)(8.000000e-02,1.735742e+00)(9.000000e-02,1.736721e+00)(1.000000e-01,1.737815e+00)(1.100000e-01,1.739023e+00)(1.200000e-01,1.740345e+00)(1.300000e-01,1.741781e+00)(1.400000e-01,1.743330e+00)(1.500000e-01,1.744993e+00)(1.600000e-01,1.746768e+00)(1.700000e-01,1.748657e+00)(1.800000e-01,1.750657e+00)(1.900000e-01,1.752769e+00)(2.000000e-01,1.754993e+00)(2.100000e-01,1.757328e+00)(2.200000e-01,1.759773e+00)(2.300000e-01,1.762328e+00)(2.400000e-01,1.764993e+00)(2.500000e-01,1.767767e+00)(2.600000e-01,1.770650e+00)(2.700000e-01,1.773640e+00)(2.800000e-01,1.776739e+00)(2.900000e-01,1.779944e+00)(3.000000e-01,1.783255e+00)(3.100000e-01,1.786673e+00)(3.200000e-01,1.790196e+00)(3.300000e-01,1.793823e+00)(3.400000e-01,1.797554e+00)(3.500000e-01,1.801388e+00)(3.600000e-01,1.805325e+00)(3.700000e-01,1.809365e+00)(3.800000e-01,1.813505e+00)(3.900000e-01,1.817746e+00)(4.000000e-01,1.822087e+00)(4.100000e-01,1.826527e+00)(4.200000e-01,1.831065e+00)(4.300000e-01,1.835702e+00)(4.400000e-01,1.840435e+00)(4.500000e-01,1.845264e+00)(4.600000e-01,1.850189e+00)(4.700000e-01,1.855209e+00)(4.800000e-01,1.860323e+00)(4.900000e-01,1.865529e+00)(5.000000e-01,1.870829e+00)(5.100000e-01,1.876220e+00)(5.200000e-01,1.881701e+00)(5.300000e-01,1.887273e+00)(5.400000e-01,1.892934e+00)(5.500000e-01,1.898684e+00)(5.600000e-01,1.904521e+00)(5.700000e-01,1.910445e+00)(5.800000e-01,1.916455e+00)(5.900000e-01,1.922550e+00)(6.000000e-01,1.928730e+00)(6.100000e-01,1.934994e+00)(6.200000e-01,1.941340e+00)(6.300000e-01,1.947768e+00)(6.400000e-01,1.954277e+00)(6.500000e-01,1.960867e+00)(6.600000e-01,1.967537e+00)(6.700000e-01,1.974285e+00)(6.800000e-01,1.981111e+00)(6.900000e-01,1.988014e+00)(7.000000e-01,1.994994e+00)(7.100000e-01,2.002049e+00)(7.200000e-01,2.009179e+00)(7.300000e-01,2.016383e+00)(7.400000e-01,2.023660e+00)(7.500000e-01,2.031010e+00)(7.600000e-01,2.038431e+00)(7.700000e-01,2.045923e+00)(7.800000e-01,2.053485e+00)(7.900000e-01,2.061116e+00)(8.000000e-01,2.068816e+00)(8.100000e-01,2.076584e+00)(8.200000e-01,2.084418e+00)(8.300000e-01,2.092319e+00)(8.400000e-01,2.100286e+00)(8.500000e-01,2.108317e+00)(8.600000e-01,2.116412e+00)(8.700000e-01,2.124571e+00)(8.800000e-01,2.132792e+00)(8.900000e-01,2.141074e+00)(9.000000e-01,2.149419e+00)(9.100000e-01,2.157823e+00)(9.200000e-01,2.166287e+00)(9.300000e-01,2.174810e+00)(9.400000e-01,2.183392e+00)(9.500000e-01,2.192031e+00)(9.600000e-01,2.200727e+00)(9.700000e-01,2.209480e+00)(9.800000e-01,2.218288e+00)(9.900000e-01,2.227151e+00)(1,2.236068e+00)  (1,3) (1,6) 
    };

    \addplot[
    fill=blue, 
                    fill opacity=0.2,
    color=blue ]
   coordinates {
    (0,1)(1.000000e-02,1.000250e+00)(2.000000e-02,1.001000e+00)(3.000000e-02,1.002251e+00)(4.000000e-02,1.004004e+00)(5.000000e-02,1.006259e+00)(6.000000e-02,1.009018e+00)(7.000000e-02,1.012283e+00)(8.000000e-02,1.016057e+00)(9.000000e-02,1.020341e+00)(1.000000e-01,1.025139e+00)(1.100000e-01,1.030453e+00)(1.200000e-01,1.036288e+00)(1.300000e-01,1.042648e+00)(1.400000e-01,1.049536e+00)(1.500000e-01,1.056958e+00)(1.600000e-01,1.064919e+00)(1.700000e-01,1.073425e+00)(1.800000e-01,1.082481e+00)(1.900000e-01,1.092093e+00)(2.000000e-01,1.102270e+00)(2.100000e-01,1.113019e+00)(2.200000e-01,1.124347e+00)(2.300000e-01,1.136262e+00)(2.400000e-01,1.148775e+00)(2.500000e-01,1.161895e+00)(2.600000e-01,1.175632e+00)(2.700000e-01,1.189996e+00)(2.800000e-01,1.205000e+00)(2.900000e-01,1.220656e+00)(3.000000e-01,1.236976e+00)(3.100000e-01,1.253975e+00)(3.200000e-01,1.271667e+00)(3.300000e-01,1.290068e+00)(3.400000e-01,1.309195e+00)(3.500000e-01,1.329064e+00)(3.600000e-01,1.349694e+00)(3.700000e-01,1.371105e+00)(3.800000e-01,1.393318e+00)(3.900000e-01,1.416354e+00)(4.000000e-01,1.440238e+00)(4.100000e-01,1.464994e+00)(4.200000e-01,1.490649e+00)(4.300000e-01,1.517231e+00)(4.400000e-01,1.544770e+00)(4.500000e-01,1.573298e+00)(4.600000e-01,1.602849e+00)(4.700000e-01,1.633460e+00)(4.800000e-01,1.665169e+00)(4.900000e-01,1.698018e+00)(5.000000e-01,1.732051e+00)(5.100000e-01,1.767316e+00)(5.200000e-01,1.803864e+00)(5.300000e-01,1.841750e+00)(5.400000e-01,1.881033e+00)(5.410000e-01,1.885041e+00)(5.411000e-01,1.885442e+00)(5.412000e-01,1.885844e+00)(5.413000e-01,1.886246e+00)(5.414000e-01,1.886648e+00)(5.415000e-01,1.887050e+00)(5.416000e-01,1.887452e+00)(5.417000e-01,1.887855e+00)(5.418000e-01,1.888257e+00)(5.419000e-01,1.888660e+00)(5.420000e-01,1.889063e+00)(5.421000e-01,1.889466e+00)(5.422000e-01,1.889869e+00)(5.423000e-01,1.890273e+00)(5.424000e-01,1.890676e+00)(5.425000e-01,1.891080e+00)(5.426000e-01,1.891483e+00)(5.427000e-01,1.891887e+00)(5.428000e-01,1.892291e+00)(5.429000e-01,1.892696e+00)(5.430000e-01,1.893100e+00)(5.431000e-01,1.893505e+00)(5.432000e-01,1.893909e+00)(5.433000e-01,1.894314e+00)(5.434000e-01,1.894719e+00) (5.400000e-01,1.892934e+00)
(5.200000e-01,1.881701e+00)(4.900000e-01,1.865529e+00)
(4.400000e-01,1.840435e+00)(3.600000e-01,1.805325e+00)
(3.200000e-01,1.790196e+00)(2.800000e-01,1.776739e+00)
(2.400000e-01,1.764993e+00)(2.000000e-01,1.754993e+00)
(1.600000e-01,1.746768e+00)(1.200000e-01,1.740345e+00)
(8.000000e-02,1.735742e+00)(2.000000e-02,1.732282e+00)
(0,1.732051e+00)
    
    }; 

    \addplot[fill=blue, 
                    fill opacity=0.2,
    color=blue  ]
   coordinates {
    (1,0)(1.000250e+00,1.000000e-02)(1.001000e+00,2.000000e-02)(1.002251e+00,3.000000e-02)(1.004004e+00,4.000000e-02)(1.006259e+00,5.000000e-02)(1.009018e+00,6.000000e-02)(1.012283e+00,7.000000e-02)(1.016057e+00,8.000000e-02)(1.020341e+00,9.000000e-02)(1.025139e+00,1.000000e-01)(1.030453e+00,1.100000e-01)(1.036288e+00,1.200000e-01)(1.042648e+00,1.300000e-01)(1.049536e+00,1.400000e-01)(1.056958e+00,1.500000e-01)(1.064919e+00,1.600000e-01)(1.073425e+00,1.700000e-01)(1.082481e+00,1.800000e-01)(1.092093e+00,1.900000e-01)(1.102270e+00,2.000000e-01)(1.113019e+00,2.100000e-01)(1.124347e+00,2.200000e-01)(1.136262e+00,2.300000e-01)(1.148775e+00,2.400000e-01)(1.161895e+00,2.500000e-01)(1.175632e+00,2.600000e-01)(1.189996e+00,2.700000e-01)(1.205000e+00,2.800000e-01)(1.220656e+00,2.900000e-01)(1.236976e+00,3.000000e-01)(1.253975e+00,3.100000e-01)(1.271667e+00,3.200000e-01)(1.290068e+00,3.300000e-01)(1.309195e+00,3.400000e-01)(1.329064e+00,3.500000e-01)(1.349694e+00,3.600000e-01)(1.371105e+00,3.700000e-01)(1.393318e+00,3.800000e-01)(1.416354e+00,3.900000e-01)(1.440238e+00,4.000000e-01)(1.464994e+00,4.100000e-01)(1.490649e+00,4.200000e-01)(1.517231e+00,4.300000e-01)(1.544770e+00,4.400000e-01)(1.573298e+00,4.500000e-01)(1.602849e+00,4.600000e-01)(1.633460e+00,4.700000e-01)(1.665169e+00,4.800000e-01)(1.698018e+00,4.900000e-01)(1.732051e+00,5.000000e-01)(1.767316e+00,5.100000e-01)(1.803864e+00,5.200000e-01)(1.841750e+00,5.300000e-01)(1.881033e+00,5.400000e-01)(1.885041e+00,5.410000e-01)(1.885442e+00,5.411000e-01)(1.885844e+00,5.412000e-01)(1.886246e+00,5.413000e-01)(1.886648e+00,5.414000e-01)(1.887050e+00,5.415000e-01)(1.887452e+00,5.416000e-01)(1.887855e+00,5.417000e-01)(1.888257e+00,5.418000e-01)(1.888660e+00,5.419000e-01)(1.889063e+00,5.420000e-01)(1.889466e+00,5.421000e-01)(1.889869e+00,5.422000e-01)(1.890273e+00,5.423000e-01)(1.890676e+00,5.424000e-01)(1.891080e+00,5.425000e-01)(1.891483e+00,5.426000e-01)(1.891887e+00,5.427000e-01)(1.892291e+00,5.428000e-01)(1.892696e+00,5.429000e-01)(1.893100e+00,5.430000e-01)(1.893505e+00,5.431000e-01)(1.893909e+00,5.432000e-01)(1.894314e+00,5.433000e-01)(1.894719e+00,5.434000e-01)(1.887273e+00,5.300000e-01)
(1.855209e+00,4.700000e-01)(1.817746e+00,3.900000e-01)
(1.786673e+00,3.100000e-01)(1.767767e+00,2.500000e-01)
(1.752769e+00,1.900000e-01)(1.741781e+00,1.300000e-01)
(1.741781e+00,1.300000e-01)(1.732570e+00,3.000000e-02)
(1.732051e+00,0)

    };   
 \addplot[color=black]   
coordinates{
        (0,1)(1.000000e-02,9.999500e-01)(2.000000e-02,9.998000e-01)(3.000000e-02,9.995499e-01)(4.000000e-02,9.991997e-01)(5.000000e-02,9.987492e-01)(6.000000e-02,9.981984e-01)(7.000000e-02,9.975470e-01)(8.000000e-02,9.967949e-01)(9.000000e-02,9.959418e-01)(1.000000e-01,9.949874e-01)(1.100000e-01,9.939316e-01)(1.200000e-01,9.927739e-01)(1.300000e-01,9.915140e-01)(1.400000e-01,9.901515e-01)(1.500000e-01,9.886860e-01)(1.600000e-01,9.871170e-01)(1.700000e-01,9.854441e-01)(1.800000e-01,9.836666e-01)(1.900000e-01,9.817841e-01)(2.000000e-01,9.797959e-01)(2.100000e-01,9.777014e-01)(2.200000e-01,9.754999e-01)(2.300000e-01,9.731906e-01)(2.400000e-01,9.707729e-01)(2.500000e-01,9.682458e-01)(2.600000e-01,9.656086e-01)(2.700000e-01,9.628603e-01)(2.800000e-01,9.600000e-01)(2.900000e-01,9.570266e-01)(3.000000e-01,9.539392e-01)(3.100000e-01,9.507366e-01)(3.200000e-01,9.474175e-01)(3.300000e-01,9.439809e-01)(3.400000e-01,9.404254e-01)(3.500000e-01,9.367497e-01)(3.600000e-01,9.329523e-01)(3.700000e-01,9.290318e-01)(3.800000e-01,9.249865e-01)(3.900000e-01,9.208149e-01)(4.000000e-01,9.165151e-01)(4.100000e-01,9.120855e-01)(4.200000e-01,9.075241e-01)(4.300000e-01,9.028289e-01)(4.400000e-01,8.979978e-01)(4.500000e-01,8.930286e-01)(4.600000e-01,8.879189e-01)(4.700000e-01,8.826664e-01)(4.800000e-01,8.772685e-01)(4.900000e-01,8.717224e-01)(5.000000e-01,8.660254e-01)(5.100000e-01,8.601744e-01)(5.200000e-01,8.541663e-01)(5.300000e-01,8.479976e-01)(5.400000e-01,8.416650e-01)(5.500000e-01,8.351647e-01)(5.600000e-01,8.284926e-01)(5.700000e-01,8.216447e-01)(5.800000e-01,8.146165e-01)(5.900000e-01,8.074032e-01)(6.000000e-01,8.000000e-01)(6.100000e-01,7.924014e-01)(6.200000e-01,7.846018e-01)(6.300000e-01,7.765951e-01)(6.400000e-01,7.683749e-01)(6.500000e-01,7.599342e-01)(6.600000e-01,7.512656e-01)(6.700000e-01,7.423611e-01)(6.800000e-01,7.332121e-01)(6.900000e-01,7.238094e-01)(7.000000e-01,7.141428e-01)(7.100000e-01,7.042017e-01)(7.200000e-01,6.939741e-01)(7.300000e-01,6.834471e-01)(7.400000e-01,6.726069e-01)(7.500000e-01,6.614378e-01)(7.600000e-01,6.499231e-01)(7.700000e-01,6.380439e-01)(7.800000e-01,6.257795e-01)(7.900000e-01,6.131068e-01)(8.000000e-01,6.000000e-01)(8.100000e-01,5.864299e-01)(8.200000e-01,5.723635e-01)(8.300000e-01,5.577634e-01)(8.400000e-01,5.425864e-01)(8.500000e-01,5.267827e-01)(8.600000e-01,5.102940e-01)(8.700000e-01,4.930517e-01)(8.800000e-01,4.749737e-01)(8.900000e-01,4.559605e-01)(9.000000e-01,4.358899e-01)(9.100000e-01,4.146082e-01)(9.200000e-01,3.919184e-01)(9.300000e-01,3.675595e-01)(9.400000e-01,3.411744e-01)(9.500000e-01,3.122499e-01)(9.600000e-01,2.800000e-01)(9.700000e-01,2.431049e-01)(9.800000e-01,1.989975e-01)(9.900000e-01,1.410674e-01)(1,0)
};  
\addplot[color=green]   
coordinates{
(1.732051e+00,6)(1.732051e+00,3)(1.742051e+00,2.994204e+00)(1.752051e+00,2.988364e+00)(1.762051e+00,2.982478e+00)(1.772051e+00,2.976548e+00)(1.782051e+00,2.970571e+00)(1.792051e+00,2.964550e+00)(1.802051e+00,2.958482e+00)(1.812051e+00,2.952367e+00)(1.822051e+00,2.946206e+00)(1.832051e+00,2.939998e+00)(1.842051e+00,2.933743e+00)(1.852051e+00,2.927440e+00)(1.862051e+00,2.921090e+00)(1.872051e+00,2.914691e+00)(1.882051e+00,2.908244e+00)(1.892051e+00,2.901748e+00)(1.902051e+00,2.895203e+00)(1.912051e+00,2.888609e+00)(1.922051e+00,2.881965e+00)(1.932051e+00,2.875270e+00)(1.942051e+00,2.868526e+00)(1.952051e+00,2.861730e+00)(1.962051e+00,2.854883e+00)(1.972051e+00,2.847984e+00)(1.982051e+00,2.841034e+00)(1.992051e+00,2.834031e+00)(2.002051e+00,2.826976e+00)(2.012051e+00,2.819867e+00)(2.022051e+00,2.812705e+00)(2.032051e+00,2.805489e+00)(2.042051e+00,2.798219e+00)(2.052051e+00,2.790894e+00)(2.062051e+00,2.783513e+00)(2.072051e+00,2.776077e+00)(2.082051e+00,2.768585e+00)(2.092051e+00,2.761037e+00)(2.102051e+00,2.753431e+00)(2.112051e+00,2.745768e+00)(2.122051e+00,2.738047e+00)(2.132051e+00,2.730267e+00)(2.142051e+00,2.722429e+00)(2.152051e+00,2.714531e+00)(2.162051e+00,2.706573e+00)(2.172051e+00,2.698554e+00)(2.182051e+00,2.690475e+00)(2.192051e+00,2.682334e+00)(2.202051e+00,2.674130e+00)(2.212051e+00,2.665864e+00)(2.222051e+00,2.657535e+00)(2.232051e+00,2.649141e+00)(2.242051e+00,2.640683e+00)(2.252051e+00,2.632160e+00)(2.262051e+00,2.623571e+00)(2.272051e+00,2.614916e+00)(2.282051e+00,2.606193e+00)(2.292051e+00,2.597403e+00)(2.302051e+00,2.588544e+00)(2.312051e+00,2.579616e+00)(2.322051e+00,2.570619e+00)(2.332051e+00,2.561550e+00)(2.342051e+00,2.552410e+00)(2.352051e+00,2.543198e+00)(2.362051e+00,2.533913e+00)(2.372051e+00,2.524554e+00)(2.382051e+00,2.515121e+00)(2.392051e+00,2.505612e+00)(2.402051e+00,2.496027e+00)(2.412051e+00,2.486365e+00)(2.422051e+00,2.476625e+00)(2.432051e+00,2.466805e+00)(2.442051e+00,2.456906e+00)(2.452051e+00,2.446926e+00)(2.462051e+00,2.436864e+00)(2.472051e+00,2.426719e+00)(2.482051e+00,2.416490e+00)(2.492051e+00,2.406176e+00)(2.502051e+00,2.395776e+00)(2.512051e+00,2.385288e+00)(2.522051e+00,2.374713e+00)(2.532051e+00,2.364047e+00)(2.542051e+00,2.353291e+00)(2.552051e+00,2.342442e+00)(2.562051e+00,2.331501e+00)(2.572051e+00,2.320464e+00)(2.582051e+00,2.309332e+00)(2.592051e+00,2.298102e+00)(2.602051e+00,2.286773e+00)(2.612051e+00,2.275344e+00)(2.622051e+00,2.263813e+00)(2.632051e+00,2.252179e+00)(2.642051e+00,2.240439e+00)(2.652051e+00,2.228593e+00)(2.662051e+00,2.216638e+00)(2.672051e+00,2.204574e+00)(2.682051e+00,2.192397e+00)(2.692051e+00,2.180106e+00)(2.702051e+00,2.167700e+00)(2.712051e+00,2.155175e+00)(2.722051e+00,2.142531e+00)(2.732051e+00,2.129765e+00)(2.742051e+00,2.116874e+00)(2.752051e+00,2.103857e+00)(2.762051e+00,2.090712e+00)(2.772051e+00,2.077435e+00)(2.782051e+00,2.064024e+00)(2.792051e+00,2.050476e+00)(2.802051e+00,2.036789e+00)(2.812051e+00,2.022961e+00)(2.822051e+00,2.008987e+00)(2.832051e+00,1.994865e+00)(2.842051e+00,1.980593e+00)(2.852051e+00,1.966165e+00)(2.862051e+00,1.951580e+00)(2.872051e+00,1.936834e+00)(2.882051e+00,1.921922e+00)(2.892051e+00,1.906841e+00)(2.902051e+00,1.891587e+00)(2.912051e+00,1.876156e+00)(2.922051e+00,1.860543e+00)(2.932051e+00,1.844743e+00)(2.942051e+00,1.828753e+00)(2.952051e+00,1.812566e+00)(2.962051e+00,1.796178e+00)(2.972051e+00,1.779583e+00)(2.982051e+00,1.762774e+00)(2.992051e+00,1.745747e+00)(3,1.732051e+00)(6,1.732051e+00)
} 	;
\addplot[color=black, fill=black, fill opacity = 0.3]   
coordinates{(1,4)
(1,2.236068e+00)(1.010000e+00,2.245039e+00)(1.020000e+00,2.254063e+00)(1.030000e+00,2.263139e+00)(1.040000e+00,2.272268e+00)(1.050000e+00,2.281447e+00)(1.060000e+00,2.290677e+00)(1.070000e+00,2.299957e+00)(1.080000e+00,2.309286e+00)(1.090000e+00,2.318663e+00)(1.100000e+00,2.328089e+00)(1.110000e+00,2.337563e+00)(1.120000e+00,2.347083e+00)(1.130000e+00,2.356650e+00)(1.140000e+00,2.366263e+00)(1.150000e+00,2.375921e+00)(1.160000e+00,2.385624e+00)(1.170000e+00,2.395371e+00)(1.180000e+00,2.405161e+00)(1.190000e+00,2.414995e+00)(1.200000e+00,2.424871e+00)(1.210000e+00,2.434790e+00)(1.220000e+00,2.444749e+00)(1.230000e+00,2.454750e+00)(1.240000e+00,2.464792e+00)(1.250000e+00,2.474874e+00)(1.260000e+00,2.484995e+00)(1.270000e+00,2.495155e+00)(1.280000e+00,2.505354e+00)(1.290000e+00,2.515591e+00)(1.300000e+00,2.525866e+00)(1.310000e+00,2.536178e+00)(1.320000e+00,2.546527e+00)(1.330000e+00,2.556912e+00)(1.340000e+00,2.567333e+00)(1.350000e+00,2.577790e+00)(1.360000e+00,2.588281e+00)(1.370000e+00,2.598807e+00)(1.380000e+00,2.609368e+00)(1.390000e+00,2.619962e+00)(1.400000e+00,2.630589e+00)(1.410000e+00,2.641250e+00)(1.420000e+00,2.651943e+00)(1.430000e+00,2.662668e+00)(1.440000e+00,2.673425e+00)(1.450000e+00,2.684213e+00)(1.460000e+00,2.695032e+00)(1.470000e+00,2.705882e+00)(1.480000e+00,2.716763e+00)(1.490000e+00,2.727673e+00)(1.500000e+00,2.738613e+00)(1.510000e+00,2.749582e+00)(1.520000e+00,2.760580e+00)(1.530000e+00,2.771606e+00)(1.540000e+00,2.782661e+00)(1.550000e+00,2.793743e+00)(1.560000e+00,2.804853e+00)(1.570000e+00,2.815990e+00)(1.580000e+00,2.827154e+00)(1.590000e+00,2.838345e+00)(1.600000e+00,2.849561e+00)(1.610000e+00,2.860804e+00)(1.620000e+00,2.872072e+00)(1.630000e+00,2.883366e+00)(1.640000e+00,2.894685e+00)(1.650000e+00,2.906028e+00)(1.660000e+00,2.917396e+00)(1.670000e+00,2.928788e+00)(1.680000e+00,2.940204e+00)(1.690000e+00,2.951644e+00)(1.700000e+00,2.963106e+00)(1.710000e+00,2.974592e+00)(1.720000e+00,2.986101e+00)(1.730000e+00,2.997632e+00)(1.732051e+00,3)(1.732051e+00,4) 
};
\addplot[color=black, fill=black, fill opacity = 0.3]   
coordinates{(4,1)
(2.236068e+00,1)(2.245039e+00,1.010000e+00)(2.254063e+00,1.020000e+00)(2.263139e+00,1.030000e+00)(2.272268e+00,1.040000e+00)(2.281447e+00,1.050000e+00)(2.290677e+00,1.060000e+00)(2.299957e+00,1.070000e+00)(2.309286e+00,1.080000e+00)(2.318663e+00,1.090000e+00)(2.328089e+00,1.100000e+00)(2.337563e+00,1.110000e+00)(2.347083e+00,1.120000e+00)(2.356650e+00,1.130000e+00)(2.366263e+00,1.140000e+00)(2.375921e+00,1.150000e+00)(2.385624e+00,1.160000e+00)(2.395371e+00,1.170000e+00)(2.405161e+00,1.180000e+00)(2.414995e+00,1.190000e+00)(2.424871e+00,1.200000e+00)(2.434790e+00,1.210000e+00)(2.444749e+00,1.220000e+00)(2.454750e+00,1.230000e+00)(2.464792e+00,1.240000e+00)(2.474874e+00,1.250000e+00)(2.484995e+00,1.260000e+00)(2.495155e+00,1.270000e+00)(2.505354e+00,1.280000e+00)(2.515591e+00,1.290000e+00)(2.525866e+00,1.300000e+00)(2.536178e+00,1.310000e+00)(2.546527e+00,1.320000e+00)(2.556912e+00,1.330000e+00)(2.567333e+00,1.340000e+00)(2.577790e+00,1.350000e+00)(2.588281e+00,1.360000e+00)(2.598807e+00,1.370000e+00)(2.609368e+00,1.380000e+00)(2.619962e+00,1.390000e+00)(2.630589e+00,1.400000e+00)(2.641250e+00,1.410000e+00)(2.651943e+00,1.420000e+00)(2.662668e+00,1.430000e+00)(2.673425e+00,1.440000e+00)(2.684213e+00,1.450000e+00)(2.695032e+00,1.460000e+00)(2.705882e+00,1.470000e+00)(2.716763e+00,1.480000e+00)(2.727673e+00,1.490000e+00)(2.738613e+00,1.500000e+00)(2.749582e+00,1.510000e+00)(2.760580e+00,1.520000e+00)(2.771606e+00,1.530000e+00)(2.782661e+00,1.540000e+00)(2.793743e+00,1.550000e+00)(2.804853e+00,1.560000e+00)(2.815990e+00,1.570000e+00)(2.827154e+00,1.580000e+00)(2.838345e+00,1.590000e+00)(2.849561e+00,1.600000e+00)(2.860804e+00,1.610000e+00)(2.872072e+00,1.620000e+00)(2.883366e+00,1.630000e+00)(2.894685e+00,1.640000e+00)(2.906028e+00,1.650000e+00)(2.917396e+00,1.660000e+00)(2.928788e+00,1.670000e+00)(2.940204e+00,1.680000e+00)(2.951644e+00,1.690000e+00)(2.963106e+00,1.700000e+00)(2.974592e+00,1.710000e+00)(2.986101e+00,1.720000e+00)(2.997632e+00,1.730000e+00)(3,1.732051e+00)(4,1.732051e+00)
};
\addplot[color = red,fill=red, fill opacity = 0.2]   
coordinates{
(2.6,4.0645)(2.5,3.9370)(2.4,3.8105)(2.3,3.6851)(2.2,3.5609)(2.1,3.4380)(2,3.3166)(1.9,3.1969)(1.8,3.0790)(1.732051e+00,3)(1.782051e+00,2.970571e+00)(1.832051e+00,2.939998e+00)(1.882051e+00,2.908244e+00)(1.932051e+00,2.875270e+00)(1.982051e+00,2.841034e+00)(2.032051e+00,2.805489e+00)(2.082051e+00,2.768585e+00)(2.132051e+00,2.730267e+00)(2.182051e+00,2.690475e+00)(2.232051e+00,2.649141e+00)(2.282051e+00,2.606193e+00)(2.332051e+00,2.561550e+00)(2.382051e+00,2.515121e+00)(2.432051e+00,2.466805e+00)(2.482051e+00,2.416490e+00)(2.532051e+00,2.364047e+00)(2.582051e+00,2.309332e+00)(2.632051e+00,2.252179e+00)(2.682051e+00,2.192397e+00)(2.732051e+00,2.129765e+00)(2.782051e+00,2.064024e+00)(2.832051e+00,1.994865e+00)(2.882051e+00,1.921922e+00)(2.932051e+00,1.844743e+00)(2.982051e+00,1.762774e+00)(2.992051e+00,1.745747e+00)(3,1.732051e+00) 
(3.0790,1.8)(3.1969,1.9)(3.3166,2)(3.4380,2.1)(3.5609,2.2)(3.6851,2.3)(3.8105,2.4)(3.9370,2.5)(4.0645,2.6)(4.0645,4.0645)
};
\addplot[color = brown , ultra thick]   
coordinates{(2.12,2.12)(4,4)
};
 \node[black] at (axis cs:3.5,0.5){${\cMI}2_{0}$}; 
  \node[black] at (axis cs:0.5,3.5){${\cMI}2_{0}$}; 
   \node[black] at (axis cs:2,3.7){${\cMI}3_{0}$}; 
   \node[black] at (axis cs:3.7,2){${\cMI}3_{0}$};
   \node[black] at (axis cs:3,3.5){${\cMI}3_{0}$}; 
   \node[black] at (axis cs:2.3,1.7){\cite{zhu15}};  
    \node[black] at (axis cs:0.4,0.4){${\cMI}1_0$}; 
    \node[black] at (axis cs:0.2,1.6){${\cMI}2_1$}; 
    \node[black] at (axis cs:1.5,0.25){${\cMI}2_1$}; 
    \node[black] at (axis cs:3.5,1.4){${\cMI}3_{1}$};
    \node[black] at (axis cs:1.4,3.5){${\cMI}3_{1}$};  
     \node[anchor=west] (source) at (axis cs:2.2,2.3){};
       \node (destination) at (axis cs:2.3,1.8){};
       \draw[->](source)--(destination);
\end{axis}

\end{tikzpicture}
}
\caption{\label{fig1} Channel conditions where sum capacity is obtained for the $3$-user many-to-one IC, $P_1=P_2=P_3=2$.}
\end{figure}
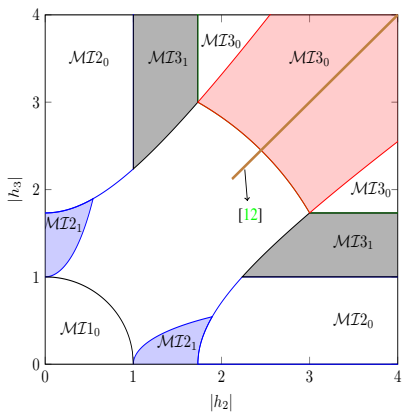
\begin{figure}
\label{fig:manytoone}\resizebox{6cm}{6cm}{
\begin{tikzpicture}
\begin{axis}[xlabel={$|h_1|$},
    ylabel={$|h_2|$},
    xmin=0, xmax=4,
    ymin=0, ymax=4,
    xtick={0,0.5,1,1.5,2,2.5,3,3.5,4},
    ytick={0,0.5,1,1.5,2,2.5,3,3.5,4},
]
\addplot[
    color=blue,fill=blue,opacity =0.2]
    coordinates {
    (1.414,0) (1.414,1) (3,1) (4,1) (4,0)
    };
\addplot[
    color=red,fill = red,opacity =0.2 ]
   coordinates {
    (4,1.414) (3,1.414) (1.414,1.414) (1.414,4)(4,4)
    };    
\addplot[
    color=black,fill = black,opacity =0.3  ]
   coordinates {
    (1.414,1.414)(1,1)(4,1)(4,1.414)
    };
\addplot[
    color=black,fill = black,opacity =0.3  ]
   coordinates {
    (1.414,1.414) (1,1)(1,4)(1.414,4)
    };
\addplot[
    color=blue,fill = blue,opacity = 0.2]
    coordinates {
    (0,1.414) (1,1.414) (1,3) (1,4)(0,4)
    };    
 \addplot[color=black]   
coordinates{
        ( 0,   1.0000)     (0.0918,0.9834)    (0.1954  , 0.9289) (0.2000  ,  0.9258 ) 
 ( 0.2045 ,   0.9227)    (0.2089  ,  0.9196 )    (0.3277 ,0.8216)   (0.3901   , 0.7629 ) (0.3930    ,0.7602 )(0.3958  ,  0.7574 )(0.3986  ,  0.7547) (0.4015    ,0.7520  )  (0.4043  ,  0.7492  )(0.4071  ,  0.7465 )   (0.4098   , 0.7438 )  
(0.4880 ,   0.6667)(0.4906 ,   0.6641)   
(0.4984 ,   0.6562)(0.5010 ,   0.6536)(0.5037 ,   0.6510 ) (0.5063 ,   0.6484)    
(0.5089 ,   0.6458 )
(0.5400 ,   0.6147)(0.5812 ,   0.5735)    (0.5838 ,   0.5709)(0.5864 ,   0.5683)       
(0.6328  ,  0.5219)(0.6354 ,   0.5193 )(0.6380 ,   0.5167 )(0.6406 ,   0.5141 )  
(0.6432 ,   0.5115) (0.6458 ,   0.5089 )(0.6484 ,   0.5063 )(0.6510 ,   0.5037 )  
(0.6536 ,   0.5010)(0.6562 ,   0.4984 )(0.6588 ,   0.4958 )(0.6614 ,   0.4932)
( 0.6641 ,   0.4906 )(0.6667 ,   0.4880  )(0.6693 ,   0.4853 )(0.6719 ,   0.4827 )   
(0.6745 ,   0.4801 )(0.6772 ,   0.4774 )(0.6798 ,   0.4748 )(0.6824 ,   0.472)   
(0.6850 ,   0.4695)(0.6877 ,   0.4668 )(0.6903 ,   0.4642 )(0.7656 ,   0.3873 ) (0.7684 ,   0.3844)(0.7711 ,  0.3816 )(0.7739 , 0.3787)(0.7767 , 0.3758) (0.8830 ,   0.2568)    (0.8860 ,   0.2531) (0.8981 ,   0.2378) (0.9012 ,   0.2338 ) (0.9042 ,   0.2298)(0.9073 ,   0.2258 )  
(0.9104 ,   0.2216)    (0.9134 ,   0.2175)  (0.9165 ,   0.2132) (0.9196 ,   0.2089)    
(0.9227 ,   0.2045)(0.9258 ,   0.2000) (0.9289 ,   0.1954)    (0.9321 ,   0.1908)    
(0.9352 ,   0.1860)    (0.9384 ,   0.1811) (0.9415 ,   0.1762)(0.9447 ,   0.1711 )   
(0.9478 ,   0.1658) (0.9510 ,   0.1604)(0.9542 ,   0.1548) (0.9574,    0.1491)    
(0.9606 ,   0.1431) (0.9639 ,   0.1369 )   (0.9671 ,   0.1304 ) (0.9703 ,   0.1236 )   
(0.9736 ,   0.1164 )   (0.9769 ,   0.1088) (0.9801 ,   0.1007 ) (0.9834 ,   0.0918)    
(0.9867 ,   0.0821) (0.9900 ,   0.0710) (0.9933 ,   0.0579 )  (0.9967 ,   0.0409 )   
(1.0000 , 0)
};  
 \node[black] at (axis cs:2.5,0.5){${\cOI}_1$}; 
  \node[black] at (axis cs:0.5,2.5){${\cOI}_1$}; 
   \node[black] at (axis cs:2.5,2.5){${\cOI}_2$}; 
    \node[black] at (axis cs:0.3,0.3){${\cOI}_0$}; 
    \node[black] at (axis cs:2.5,1.25){${\cOI}_{2_{1}}$};
    \node[black] at (axis cs:1.25,2.5){${\cOI}_{2_{1}}$};
\end{axis}
\end{tikzpicture}
}
\caption{\label{fig2} Channel conditions where sum capacity is obtained for the $3$-user one-to-many IC, $P_1=P_2=P_3=1$.}
\label{fig:MISO}
\end{figure}
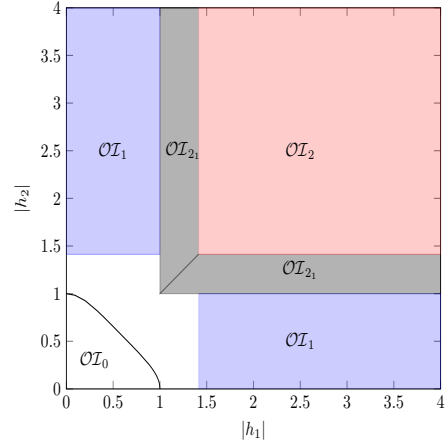
\section{Channel Models in standard form}
The received signals in the Gaussian many-to-one IC in standard form are given by:
\begin{eqnarray}
y_{1} = x_{1} + \underset{j=2}{\overset{K}{\sum}} h_{i}x_{i} + z_{1} \label{m21model1}\\
y_{i} = x_{i} + z_{i}, i = 2, 3, ..., K, \label{m21model2}
\end{eqnarray}
\noindent
where $x_{i}$ is transmitted from transmitter $i$, $z_i \sim \mathcal{N}(0,1)$ for each $i$. 
The average power constraint at transmitter $i$ is $P_i$. Similarly, the received signals in the Gaussian one-to-many IC in standard form are given by:
\begin{IEEEeqnarray}{lcr}
y_i =x_i+h_i x_{\scriptscriptstyle{K}}+z_i, \ i=1, 2, 3, \cdots, K - 1 \label{model1}\\
y_{\scriptscriptstyle{K}} =x_{\scriptscriptstyle{K}}+z_{\scriptscriptstyle{K}}. \label{model2}
\end{IEEEeqnarray} 

\section{Achievable rate region for Han-Kobayashi (HK) scheme in simplified form}
\subsection{Many-to-one IC}
Let $W_{i}$ be the message at transmitter $i$. For each $i = 2, 3,
..., K$, the message is split into two parts
$W_{i}=\{W_{i0},W_{i1}\}$, where $W_{i0}$ is common message that gets
decoded at receiver $i$ and also at receiver 1, and $W_{i1}$ is the private
message that gets decoded only at receiver $i$. The HK achievable rate region in simplified form in the Theorem below is stated for the discrete memoryless channel, and can be readily extended to the Gaussian many-to-one IC with average power constraints using standard approaches \cite{HanKob81,ElgKim11}.

\begin{theorem}\label{HKtheorem}
For the discrete memoryless $K$-user many-to-one IC, the HK achievable rate region is given by the set of all $(R_{1},R_{2},\hdots, R_{K})$ that satisfy: 
\begin{multline} \label{eqn:theorem_result_1}
R_{1} + \underset{j \in \cN}\sum R_{j} \leq \underset{j \in \cN} \sum I(X_{j};Y_{j}|Q, U_{j}) \\ + I(U_{\cN}X_{1};Y_{1}|U_{\cF-\cN}, Q) , \forall \cN \subseteq \cF
\end{multline}
\begin{equation}
\label{eqn:theorem_result_2}
R_{i} \leq I(X_{i};Y_{i}|Q), i\in [2:K]
\end{equation}
where $U_{\cA} = \{U_{i},i\in \cA\}$, ${\cF} = \{2, 3, ..., K\}$ and 
$(Q,U_{2},U_{3},\hdots, U_{\scriptscriptstyle{K}},X_{1},X_{2}, \hdots
X_{\scriptscriptstyle{K}})$ is distributed as
\begin{IEEEeqnarray}{lcr}
\label{input_dist}
\nonumber p(q,u_{2},\hdots,u{\scriptscriptstyle{K}},x_{1},\hdots, x_{\scriptscriptstyle{K}})\\
\nonumber = p(q)p(x_{1}|q)\underset{i=2}{\overset{K}{\prod}}(p(u_{i}|q)p(x_{i}|u_{i},q).
\end{IEEEeqnarray}
\end{theorem}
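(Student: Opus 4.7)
The plan is to run the standard Han--Kobayashi (HK) achievability argument with a common/private rate split at transmitters $2,\ldots,K$, write down the MAC-type rate constraints arising at each receiver, and then Fourier--Motzkin-eliminate the auxiliary rates to arrive at (\ref{eqn:theorem_result_1})--(\ref{eqn:theorem_result_2}).

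First I would construct the code from the distribution in (\ref{input_dist}): a time-sharing sequence $q^n$, cloud centers $u_i^n(w_{i0})$ for each $i\in\cF$, superposition codewords $x_i^n(w_{i0},w_{i1})$, and, separately, $x_1^n(w_1)$. Setting $R_i = R_{i0}+R_{i1}$, the error analysis at receiver $i\in\cF$ (joint decoding of both parts of $W_i$ from $Y_i = X_i + Z_i$) gives $R_{i1} \leq I(X_i;Y_i|U_i,Q)$ and $R_i \leq I(X_i;Y_i|Q)$; the latter is already (\ref{eqn:theorem_result_2}). At receiver $1$, only $W_1$ needs to be recovered correctly, so I would employ simultaneous non-unique decoding of $(W_1, W_{20},\ldots, W_{K0})$. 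Then only error events containing an incorrect $W_1$ are binding --- this is the reason the constraints without $R_1$ are absent in the theorem --- and the packing lemma yields, for every $\cN\subseteq\cF$,
\[ R_1 + \sum_{j\in\cN} R_{j0} \;\leq\; I(X_1, U_{\cN}; Y_1 \mid U_{\cF-\cN}, Q). \]

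The next step is Fourier--Motzkin elimination of the $(R_{i0},R_{i1})$ pairs via $R_{i0} = R_i - R_{i1}$ with $R_{i1}\in[0,\min(R_i, I(X_i;Y_i|U_i,Q))]$. Choosing $R_{i1}$ as large as possible, the receiver-$1$ constraint reduces to
\[ R_1 + \sum_{j\in\cN} \max\{0,\, R_j - I(X_j;Y_j|U_j,Q)\} \;\leq\; I(X_1, U_{\cN}; Y_1 \mid U_{\cF-\cN}, Q), \]
which is precisely the feasibility condition for a valid split to exist. Relaxing the $\max$ produces (\ref{eqn:theorem_result_1}), so the inclusion of the Fourier--Motzkin region in the claimed region is immediate.

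The equivalence in the other direction is where I expect the main technical obstacle: the inequality in (\ref{eqn:theorem_result_1}), which permits $R_j - I(X_j;Y_j|U_j,Q)$ to be negative, is apparently weaker than the Fourier--Motzkin form, so one must show that no rate tuple satisfies (\ref{eqn:theorem_result_1})--(\ref{eqn:theorem_result_2}) while failing the stronger version. The resolution is to apply (\ref{eqn:theorem_result_1}) not to $\cN$ itself but to the restricted subset $\cN_+ := \{j\in\cN : R_j > I(X_j;Y_j|U_j,Q)\}$, and then to invoke the monotonicity
\[ I(X_1, U_{\cN_+}; Y_1 \mid U_{\cF-\cN_+}, Q) \;\leq\; I(X_1, U_{\cN}; Y_1 \mid U_{\cF-\cN}, Q), \]
a direct consequence of $H(Y_1\mid U_{\cS},Q)$ being non-increasing in $\cS$. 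This recovers the Fourier--Motzkin form with the $\max$ from (\ref{eqn:theorem_result_1}) applied at $\cN_+$, and together with (\ref{eqn:theorem_result_2}) certifies a feasible $(R_{i0},R_{i1})$ split, completing the proof.
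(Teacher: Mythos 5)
Your proposal is correct and follows essentially the same route as the paper: the identical starting constraints (your $R_{i0},R_{i1}$ playing the roles of the paper's $T_i,S_i$, including the non-unique-decoding constraints at receiver 1 indexed by $\cN\subseteq\cF$) followed by elimination of the split rates. The only difference is presentational: where the paper performs a sequential Fourier--Motzkin elimination and asserts the intermediate form ``by induction,'' you carry out the elimination in one shot via the greedy choice of $R_{i1}$ and then justify dropping the $\max\{0,\cdot\}$ terms explicitly through the restriction to $\cN_+$ and the monotonicity of $I(X_1,U_{\cN};Y_1\mid U_{\cF-\cN},Q)$ in $\cN$ --- a redundancy argument the paper leaves implicit but which is indeed needed for the stated max-free form to be exact.
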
 
\begin{proof}
See Appendix \ref{RateMotz}.
\end{proof}

\begin{corollary} \label{HKsumm21}
The achievable sum rate $S$ for a discrete memoryless many-to-one IC satisfies:
\begin{multline}
\label{eqn:corollary result2}
S \leq \underset{i\in \cN}\sum I(X_{i};Y_{i}|Q,U_{i})+ \underset{i \in \cF-\cN}\sum I(X_{i};Y_{i}|Q) \\ + I(U_{\cN}X_{1};Y_{1}|U_{\cF-\cN},Q),\forall \cN \subseteq \cF,
\end{multline}
where  ${\cF} = \{2,3,\hdots,K\}$.
\end{corollary}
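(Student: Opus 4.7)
\textbf{Proof plan for Corollary \ref{HKsumm21}.} The statement is a direct consequence of Theorem \ref{HKtheorem}, so the plan is essentially to combine the two families of bounds in the theorem so that the left-hand side becomes the sum $S = R_1 + R_2 + \cdots + R_K$.

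First, I would fix an arbitrary subset $\cN \subseteq \cF = \{2, 3, \ldots, K\}$ and invoke the bound (\ref{eqn:theorem_result_1}) of Theorem \ref{HKtheorem} for this choice of $\cN$, giving
\begin{equation*}
R_{1} + \sum_{j \in \cN} R_{j} \leq \sum_{j \in \cN} I(X_{j};Y_{j}|Q, U_{j}) + I(U_{\cN}X_{1};Y_{1}|U_{\cF-\cN}, Q).
\end{equation*}
Next, for each index $i$ in the complementary set $\cF - \cN$, I would apply the individual rate bound (\ref{eqn:theorem_result_2}), namely $R_i \leq I(X_i; Y_i | Q)$, and sum those inequalities over $i \in \cF-\cN$ to obtain
\begin{equation*}
\sum_{i \in \cF-\cN} R_{i} \leq \sum_{i \in \cF-\cN} I(X_{i};Y_{i}|Q).
\end{equation*}

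Adding these two inequalities, the left-hand side becomes $R_1 + \sum_{j \in \cN} R_j + \sum_{i \in \cF - \cN} R_i = R_1 + \sum_{j \in \cF} R_j = S$, and the right-hand side is exactly the expression claimed in (\ref{eqn:corollary result2}). Since $\cN \subseteq \cF$ was arbitrary, the bound holds for every such $\cN$, which completes the argument. There is no real obstacle here; the only thing worth double-checking is the index bookkeeping, in particular that $\cF$ and $\cF-\cN$ partition the interfering users and that the conditioning sets $U_{\cF-\cN}$ appearing in the two types of bounds are consistent with the joint distribution specified in (\ref{input_dist}).
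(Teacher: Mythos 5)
Your argument is correct for the corollary as stated, but it takes a genuinely different and more elementary route than the paper. You fix $\cN$, take the bound (\ref{eqn:theorem_result_1}) for that $\cN$, add the single-rate bounds (\ref{eqn:theorem_result_2}) summed over $i \in \cF-\cN$, and observe that the left-hand sides combine to $R_1 + \sum_{i\in\cF} R_i = S$; that is all the literal claim ``$S \leq \cdots$ for every $\cN \subseteq \cF$'' requires, and your index bookkeeping is consistent. The paper instead substitutes $R_1 = S - \sum_{i=2}^{K} R_i$ into the region of Theorem \ref{HKtheorem} and runs a full Fourier--Motzkin elimination of $R_2,\ldots,R_K$ (proved by induction on the elimination step). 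That heavier machinery buys something your summation does not: it establishes that the family of inequalities in (\ref{eqn:corollary result2}) is the \emph{exact} projection of the achievable region onto the sum-rate axis, i.e., no additional constraints on $S$ survive the elimination, so the maximum achievable sum rate equals the minimum over $\cN$ of the right-hand sides. This exactness is implicitly used downstream, e.g., when the paper asserts that ``the least of these $2^{k-1}$ upper bounds will determine the maximum achievable sum rate'' for the simple HK scheme in (\ref{eqn:corollary_changed_1}) and in the achievability parts of Theorems \ref{MIK0Theorem} and \ref{Maintheorem}. So your proof fully suffices for the converse-direction inequality stated in the corollary, but if one also wants the achievability of the minimizing bound, the Fourier--Motzkin (or an explicit corner-point) argument is still needed.
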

\begin{proof}
See Appendix \ref{coroproof} for proof.
\end{proof}

{\em Simple HK schemes:} Consider HK schemes with Gaussian signaling, no timesharing, and no common-private power splitting, i.e., $X_{i} \sim \mathcal{N}(0,P_{i}), \ \ \forall \ 1\leq i\leq K$, $Q$ is constant, and $U_{i} = X_{i}, i \in \cB$  and $U_{i} = \phi, i \notin \cB$ for a fixed ${\cB} \subseteq \{2,3,\hdots,K\}$. The set $\cB$ denotes the indices of the set of transmit messages decoded at receiver 1. For simple HK schemes, we get the following sum rate result directly from Corollary \ref{HKsumm21}.

\begin{corollary}\label{corollary_HK}
The achievable sum rate of a simple HK scheme over the Gaussian many-to-one IC satisfies:
\begin{multline}\label{eqn:corollary_result_1}
S \leq \frac{1}{2}\underset{i\notin \cB}\sum \log (1 + P_{i}) + \frac{1}{2}\underset{i \in \cM}\sum \log (1+P_{i}) \\+ \frac{1}{2} \ \log \left(1 + \frac{P_{1} + \underset{i \in \cB-\cM}\sum h_{i}^2 P_{i}}{1+ \underset{i \notin \cB}\sum h_{i}^2 P_{i} }\right) , \forall \cM\subseteq \cB 
\end{multline}
 for a fixed ${\cB} \subseteq \{2,3,\hdots,K\}$.
\end{corollary}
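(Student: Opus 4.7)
The plan is to specialize Corollary \ref{HKsumm21} to the simple HK input distribution and rewrite the resulting bound in the form claimed. Since $Q$ is constant, conditioning on $Q$ is trivial. The only nontrivial task is to evaluate the three terms in \eqref{eqn:corollary result2} under the choice $X_i \sim \mathcal{N}(0,P_i)$ independent, $U_i = X_i$ for $i \in \cB$, and $U_i = \phi$ for $i \in \cF \setminus \cB$, and then to track how an arbitrary $\cN \subseteq \cF$ in Corollary \ref{HKsumm21} reduces to the parameter $\cM \subseteq \cB$ in the statement.

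I would first split an arbitrary $\cN \subseteq \cF$ as $\cN = \cN_\cB \sqcup \cN_{\bar\cB}$ with $\cN_\cB = \cN \cap \cB$ and $\cN_{\bar\cB} = \cN \cap (\cF \setminus \cB)$. For Term 1, $\sum_{i \in \cN} I(X_i; Y_i \mid U_i)$, the indices $i \in \cN_\cB$ contribute $0$ since $U_i = X_i$, while indices $i \in \cN_{\bar\cB}$ contribute $\frac{1}{2}\log(1+P_i)$ (point-to-point AWGN). For Term 2, $\sum_{i \in \cF - \cN} I(X_i; Y_i)$, every index contributes $\frac{1}{2}\log(1+P_i)$. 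Combining, Terms 1 and 2 collapse to
\[
\frac{1}{2}\sum_{i \notin \cB} \log(1+P_i) \;+\; \frac{1}{2}\sum_{i \in \cB \setminus \cN_\cB} \log(1+P_i),
\]
which already matches the first two sums in \eqref{eqn:corollary_result_1} upon setting $\cM := \cB \setminus \cN_\cB$; note that the result is independent of $\cN_{\bar\cB}$, so we may take $\cN_{\bar\cB}$ arbitrary.

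For Term 3, I would write $I(U_\cN X_1; Y_1 \mid U_{\cF - \cN}) = h(Y_1 \mid U_{\cF-\cN}) - h(Y_1 \mid U_\cN, X_1, U_{\cF-\cN})$ and evaluate both entropies on Gaussians. Observing that $U_{\cF-\cN}$ equals $\{X_i : i \in \cM\}$ and that $U_\cN$ together with $U_{\cF - \cN}$ equals $\{X_i : i \in \cB\}$, the first entropy is that of $X_1 + \sum_{i \in \cB \setminus \cM} h_i X_i + \sum_{i \notin \cB} h_i X_i + Z_1$ and the second is that of $\sum_{i \notin \cB} h_i X_i + Z_1$. Taking the difference yields
\[
\frac{1}{2}\log\left(1 + \frac{P_1 + \sum_{i \in \cB \setminus \cM} h_i^2 P_i}{1 + \sum_{i \notin \cB} h_i^2 P_i}\right),
\]
exactly the third term in \eqref{eqn:corollary_result_1}. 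Since $\cN_\cB$ ranges over all subsets of $\cB$ as $\cN$ ranges over $\cF$, $\cM = \cB \setminus \cN_\cB$ ranges over all subsets of $\cB$, giving the ``$\forall \cM \subseteq \cB$'' quantifier.

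There is no real obstacle: the only care needed is the bookkeeping in Term 3 to correctly identify which interferers are conditioned out in each entropy and which remain as noise. The proof is essentially a substitution into Corollary \ref{HKsumm21} followed by a relabeling $\cM = \cB \setminus (\cN \cap \cB)$.
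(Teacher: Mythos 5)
Your proposal is correct and is precisely the computation the paper has in mind: the paper states that Corollary~\ref{corollary_HK} follows ``directly'' from Corollary~\ref{HKsumm21} by substituting the simple HK input distribution, and you have carried out exactly that substitution, including the key bookkeeping that the bound depends on $\cN$ only through $\cN \cap \cB$ and the relabeling $\cM = \cB \setminus (\cN \cap \cB)$. No gaps.
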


In the above theorem, we get a sum rate constraint for each subset $\cM$ of $\cB$. 

\subsection{One-to-many IC}
Let $\cI$ denote the set of indices of the receivers at which interference is decoded, and $\cJ$ be the set of receivers at which interference is treated as noise, 
i.e., $\cJ$ = $\{1,2,\cdots ,K-1 \} \backslash \cI $. Let $W_{i}$ be the message at transmitter $i$. The message $W_{K}$ gets split into two parts $W_{K} = \{W_{K0},W_{K1}\} $, where $W_{K0}$ represents the common message that gets decoded at every receiver in $\cI$ and $W_{K1}$ is the private message that gets decoded only at receiver $K$.
\begin{theorem}\label{HKtheoremontotmany}
For the discrete memoryless $K$-user one-to-many IC, the HK achievable rate region is given by the set of all $(R_{1},R_{2},\hdots, R_{K})$ that satisfy 
\begin{IEEEeqnarray}{lcr}
R_{i}\leq I(X_{i};Y_{i}|Q), i\in \cJ \nonumber\\
R_{i}\leq I(X_{i};Y_{i}|Q,U), i \in \cI \nonumber \\
R_{i} + R_{K} \leq I(X_{i},U;Y_{i}|Q) + I(X_{K};Y_{K}|Q,U), i \in \cI \nonumber \\
R_{K}\leq I(X_{K};Y_{K}|Q), \nonumber
\end{IEEEeqnarray}
where $(Q,U,X_{1},X_2,\hdots,X_K)$ is distributed as 
$$p(q,u,x_1,x_2,..x_K) = p(q)\underset{i=1}{\overset{K-1}\prod}p(x_{i}|q)p(u|q)p(x_{K}|u, q). $$
\end{theorem}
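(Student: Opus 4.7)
The plan is to adapt the Han--Kobayashi achievability proof to the one-to-many topology and then simplify via Fourier--Motzkin elimination (FME), mirroring the proof of Theorem~\ref{HKtheorem}. Since transmitter $K$ is the only interferer, it is the only message that needs to be rate-split: transmitters $i\in\{1,\ldots,K-1\}$ each generate a single codebook $x_i^n(W_i)\sim p(x_i\mid q)$, while transmitter $K$ superposition-codes $W_K=(W_{K0},W_{K1})$ using a common codebook $u^n(W_{K0})\sim p(u\mid q)$ and a private codebook $x_K^n(W_{K0},W_{K1})\sim p(x_K\mid u,q)$.

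Next I would write down the joint-typicality decoding rules and read off the resulting achievability constraints at every receiver. Receiver $j\in\cJ$ decodes only $W_j$ and treats interference as noise, giving $R_j\le I(X_j;Y_j\mid Q)$. Receiver $i\in\cI$ jointly decodes $(W_i,W_{K0})$, yielding the three MAC-type bounds $R_i\le I(X_i;Y_i\mid U,Q)$, $R_{K0}\le I(U;Y_i\mid X_i,Q)$, and $R_i+R_{K0}\le I(X_i,U;Y_i\mid Q)$. Receiver $K$ jointly decodes $(W_{K0},W_{K1})$, producing $R_{K1}\le I(X_K;Y_K\mid U,Q)$, $R_{K0}\le I(U;Y_K\mid Q)$, and $R_{K0}+R_{K1}\le I(X_K;Y_K\mid Q)$. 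Together with $R_{K0},R_{K1}\ge 0$ and $R_K=R_{K0}+R_{K1}$, these describe the pre-projected region.

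The third step is FME to eliminate $R_{K0}$ and $R_{K1}$. Substituting $R_{K1}=R_K-R_{K0}$ rewrites every constraint as either an upper or a lower bound on $R_{K0}$, and each upper/lower pairing yields a projected inequality. The non-trivial pairing combines $R_{K0}\le I(X_i,U;Y_i\mid Q)-R_i$ with $R_{K0}\ge R_K-I(X_K;Y_K\mid U,Q)$, producing the mixed bound $R_i+R_K\le I(X_i,U;Y_i\mid Q)+I(X_K;Y_K\mid U,Q)$ for each $i\in\cI$. Pairing $R_{K0}\le I(U;Y_K\mid Q)$ with the same lower bound recovers $R_K\le I(X_K;Y_K\mid Q)$ via the identity $I(X_K;Y_K\mid Q)=I(U;Y_K\mid Q)+I(X_K;Y_K\mid U,Q)$, which holds because $U\to X_K\to Y_K$ is Markov. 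The single-letter bounds $R_i\le I(X_i;Y_i\mid U,Q)$ for $i\in\cI$ and $R_j\le I(X_j;Y_j\mid Q)$ for $j\in\cJ$ pass through the projection unchanged.

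The main obstacle, as in the proof of Theorem~\ref{HKtheorem}, will be the redundancy check: verifying that every other upper/lower pairing produced by FME is dominated by one of the four listed families, under the stated factorisation $p(q)\prod_{i=1}^{K-1}p(x_i\mid q)\,p(u\mid q)\,p(x_K\mid u,q)$. The main tools for this audit are the chain-rule identity just mentioned, its analogue $I(X_i,U;Y_i\mid Q)=I(X_i;Y_i\mid U,Q)+I(U;Y_i\mid Q)$, and non-negativity of mutual information; for instance, the spurious inequality $R_i\le I(X_i,U;Y_i\mid Q)$ obtained by pairing the MAC sum bound with $R_{K0}\ge 0$ is immediately dominated by $R_i\le I(X_i;Y_i\mid U,Q)$ via the second identity. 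Once this accounting is complete, the DMC region matches the theorem's statement, and the extension to the Gaussian one-to-many IC with average-power constraints $P_i$ used later in the paper follows by the standard discretisation argument \cite{HanKob81,ElgKim11}.
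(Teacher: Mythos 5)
Your overall route is the same as the paper's: superposition coding at transmitter $K$ only, single-user codebooks elsewhere, and Fourier--Motzkin elimination of the common rate $T=R_{K0}$ (the paper writes down the pre-FME region directly as ``standard analysis of HK schemes'' and eliminates $T$). However, there is a genuine gap in your starting set of inequalities. At each receiver $i\in\cI$ you include the full MAC bound $R_{K0}\le I(U;Y_i\mid X_i,Q)$. That constraint cannot be disposed of by your ``redundancy audit'': pairing it with the lower bound $R_{K0}\ge R_K-I(X_K;Y_K\mid U,Q)$ produces $R_K\le I(U;Y_i\mid X_i,Q)+I(X_K;Y_K\mid U,Q)$, and since $I(U;Y_i\mid X_i,Q)$ can be strictly smaller than $I(U;Y_K\mid Q)$ (e.g., in the Gaussian channel with $U=X_K$ and $h_i^2<1$ one gets $\tfrac12\log(1+h_i^2P_K)<\tfrac12\log(1+P_K)$), this is not dominated by $R_K\le I(X_K;Y_K\mid Q)$. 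Your FME would therefore yield a region strictly smaller than the one claimed in the theorem.

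The missing idea is the standard refinement of the HK error analysis: at a receiver $i\in\cI$ the event that $W_i$ is decoded correctly but the interferer's common message $W_{K0}$ is not is \emph{not} a decoding error, so the individual bound $R_{K0}\le I(U;Y_i\mid X_i,Q)$ may be deleted before projection. This is exactly why the paper's pre-FME list at receivers in $\cI$ contains only $R_i\le I(X_i;Y_i\mid U,Q)$ and $R_i+T\le I(X_iU;Y_i\mid Q)$, with no stand-alone constraint on $T$ at those receivers. Once you drop that constraint (and note, as you already do, that the pairing $R_{K0}\le I(U;Y_K\mid Q)$ with the same lower bound just reproduces $R_K\le I(X_K;Y_K\mid Q)$ by the Markov chain $U\to X_K\to Y_K$, and that $R_i\le I(X_i,U;Y_i\mid Q)$ from pairing with $R_{K0}\ge 0$ is dominated), the elimination gives exactly the four families in the statement, and the rest of your argument, including the discretisation step for the Gaussian case, goes through.
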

\begin{proof}
See Appendix \ref{proofth9}.
\end{proof}

{\em Simple HK scheme:} Let $X_{i} \sim \mathcal{N}(0,P_{i}), \ \ \forall \ 1\leq i\leq K$, $Q$ is constant, and $U = X_{K}$. From Theorem \ref{HKtheoremontotmany}, we directly get the following result.

\begin{corollary}\label{simpleHKonetomany}
The achievable rate region for the simple HK scheme over the Gaussian one-to-many IC is given by:
\begin{IEEEeqnarray}{lcr}
\label{one1}R_{i}\leq \frac{1}{2}\ \log(1+\frac{P_{i}}{1+h_{i}^2P_{K}}), i\in \cJ,\\
\label{e1}R_{i}\leq \frac{1}{2}\ \log(1+P_{i}), i \in \cI,\\
\label{e2}R_{i} + R_{K} \leq \frac{1}{2}\ \log(1 + P_{i}+h_{i}^2P_{K}), i\in \cI,\\
\label{end1}R_{K}\leq \frac{1}{2}\ \log(1+P_{K}).
\end{IEEEeqnarray} 
\end{corollary}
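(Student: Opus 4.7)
The plan is to directly evaluate each of the four mutual information bounds in Theorem \ref{HKtheoremontotmany} under the specific distribution choices of the simple HK scheme, namely $Q$ constant, $U = X_K$, and $X_1,\ldots,X_K$ independent with $X_i \sim \mathcal{N}(0,P_i)$. Since $Q$ is constant it can be dropped from all conditioning, and conditioning on $U=X_K$ is the same as conditioning on $X_K$. The channel equations (\ref{model1})--(\ref{model2}) then let us read off each mutual information as the capacity of a scalar Gaussian channel.

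First, for each $i \in \cJ$, under $p(y_i\mid x_i) = \int p(x_K)\,p(y_i\mid x_i,x_K)\,dx_K$ the interference $h_i X_K$ acts as independent Gaussian noise of variance $h_i^2 P_K$ added to $Z_i$, so
\[
I(X_i;Y_i) \;=\; \tfrac{1}{2}\log\!\left(1+\tfrac{P_i}{1+h_i^2 P_K}\right),
\]
which is (\ref{one1}). Second, for $i \in \cI$, conditioning on $U=X_K$ removes the interference, leaving the clean Gaussian channel $Y_i - h_i X_K = X_i + Z_i$, so $I(X_i;Y_i\mid X_K) = \tfrac{1}{2}\log(1+P_i)$, giving (\ref{e1}). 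Third, for the joint bound with $i \in \cI$,
\[
I(X_i,X_K;Y_i) \;=\; \tfrac{1}{2}\log(1+P_i+h_i^2 P_K),
\]
while $I(X_K;Y_K\mid X_K)=0$; summing yields (\ref{e2}). Finally, $Y_K = X_K + Z_K$ is a scalar Gaussian channel, so $I(X_K;Y_K) = \tfrac{1}{2}\log(1+P_K)$, which is (\ref{end1}).

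There is essentially no obstacle here beyond bookkeeping: the result is pure substitution into Theorem \ref{HKtheoremontotmany}, combined with the standard Gaussian-capacity formula. The only point worth flagging is that the cross term $I(X_K;Y_K\mid U)$ in the sum-rate bound vanishes because $U=X_K$, which is why the right-hand side of (\ref{e2}) depends only on the $i$-th link and not on $P_K$ in a separate summand.
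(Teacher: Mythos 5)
Your proposal is correct and takes exactly the route the paper intends: the paper gives no explicit proof, stating only that the corollary follows "directly" from Theorem \ref{HKtheoremontotmany} under the simple HK substitution, and your term-by-term evaluation (including the observation that $I(X_K;Y_K\mid Q,U)=0$ when $U=X_K$, so (\ref{e2}) has no separate $P_K$ summand) is precisely that computation carried out.
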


\begin{corollary}\label{sumrate}
The achievable sum rate $S$ for the simple HK scheme over the Gaussian one-to-many IC when $\cJ = \phi$ satisfies
\begin{IEEEeqnarray}{lcr}
\label{srom1}S\leq \underset{j=1}{\overset{K}\sum} \frac{1}{2}\  \log(1+P_{j}), \\
\nonumber S\leq \underset{j\neq i}{\underset{j=1}{\overset{K-1}\sum}} \frac{1}{2}\ \log (1+P_{j}) + \frac{1}{2}\ \log(1+P_{i}+h_{i}^2P_{K}),\\ \forall\ 1\leq i\leq K-1.
\label{next}
\end{IEEEeqnarray}
\end{corollary}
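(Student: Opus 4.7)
My plan is to derive both sum rate bounds directly from Corollary \ref{simpleHKonetomany} by summing appropriate subsets of the single-rate constraints, which amounts to a small-scale Fourier-Motzkin elimination. Since $\cJ = \phi$, we have $\cI = \{1,2,\ldots,K-1\}$, so constraint (\ref{one1}) is vacuous and only (\ref{e1}), (\ref{e2}), and (\ref{end1}) are active. Writing $S = R_1 + R_2 + \cdots + R_{K-1} + R_K$, the task reduces to combining these three families of inequalities in two different ways.

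For the first bound (\ref{srom1}), I would simply add the individual single-rate bounds. Applying (\ref{e1}) to each $R_i$ with $i \in \cI$ gives $R_i \leq \tfrac{1}{2}\log(1+P_i)$, and adding (\ref{end1}) for $R_K$ yields $S \leq \sum_{j=1}^{K-1} \tfrac{1}{2}\log(1+P_j) + \tfrac{1}{2}\log(1+P_K) = \sum_{j=1}^{K}\tfrac{1}{2}\log(1+P_j)$.

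For the family of bounds (\ref{next}), indexed by $i \in \{1,\ldots,K-1\}$, I would split $S$ as
\[
S \;=\; \sum_{\substack{j=1 \\ j\neq i}}^{K-1} R_j \;+\; (R_i + R_K),
\]
apply (\ref{e1}) to each $R_j$ with $j \in \cI \setminus \{i\}$, and apply (\ref{e2}) for the chosen index $i$ to upper bound the pair $R_i + R_K$ by $\tfrac{1}{2}\log(1+P_i+h_i^2 P_K)$. Summing these gives exactly (\ref{next}).

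There is no substantive obstacle here: the bounds follow by inspection once one notes that $\cJ=\phi$ removes (\ref{one1}) and that the second family arises by coupling $R_K$ with exactly one of the $R_i$'s via (\ref{e2}). The only item worth mentioning explicitly in the write-up is that one does not need the bound (\ref{end1}) separately in deriving (\ref{next}), since the bound on $R_i + R_K$ already absorbs $R_K$ for each particular $i$. Consequently, the proof reduces to two short summation arguments, which I would present in one display each.
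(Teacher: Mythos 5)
Your proof is correct and matches the paper's in substance: the paper derives these bounds by the same combination of the constraints from Corollary \ref{simpleHKonetomany}, merely packaged as a formal Fourier--Motzkin elimination of $R_1,\dots,R_{K-1}$ (after substituting $R_K = S - \sum_{i=1}^{K-1}R_i$) with inductive bookkeeping of the intermediate inequalities. Your two direct summations are exactly the nonnegative combinations that this elimination produces, so nothing further is needed for the statement as written.
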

\begin{proof}
See appendix \ref{cor4proof}.
\end{proof}
\section{Sum capacity results }
\subsection{Gaussian many-to-one IC}
           Consider the simple HK scheme with ${\cB} = \{2, 3, \hdots, k \}$, i.e., interference from transmitters 2 to $k$ are decoded at receiver 1. We choose successive indices 2 to $k$ only for notational convenience, and the results can be generalized to any set of $k - 1$ indices by just relabeling the transmitters. For this case, from (\ref{eqn:corollary_result_1}), we have the following $2^{k-1}$ sum rate constraints:
\begin{multline}\label{eqn:corollary_changed_1}
S \leq \frac{1}{2}\underset{i=k+1}{\overset{K}\sum} \log(1 + P_{i}) +\frac{1}{2} \underset{i \in \cM}\sum \log(1+P_{i}) \\+ \frac{1}{2} \ \log\left(1 + \frac{P_{1} + \underset{i \in \cB-\cM}\sum h_{i}^2P_{i}}{1+ \underset{i=k+1}{\overset{K}\sum} h_{i}^2 P_{i} }\right) , \forall \cM\subseteq \cB. 
\end{multline}
The least of these $2^{k-1}$ upper bounds will determine the maximum achievable sum rate for this simple HK scheme. We will now discuss two cases below where we can show that the simple HK scheme achieves sum capacity.

{\em Case 1 }(${\cMI} k_0$): Here we consider the case when the inequality corresponding to $\cM = \cB$ in (\ref{eqn:corollary_changed_1}) is the dominant inequality, i.e., its right hand side is the least. 

\begin{theorem}\label{MIK0Theorem}
For the K-user Gaussian many-to-one IC satisfying the following channel conditions:
\begin{multline}\label{eqn:theorem_result_9}
\underset{i \in \cB-\cN}\prod (1+P_{i}).(1+\underset{j=k+1}{\overset{K}\sum} {h_{j}^2}P_{j} + P_1) \leq \\1+\underset{i\notin \cN}\sum h_{i}^2 P_{i} + P_{1}
, \forall \cN \subset \cB, \cN \ne \cB,
\end{multline}
\begin{equation}\label{MIk0con}
\sum_{j=k+1}^K h_{j}^2 \leq 1,
\end{equation}
where $\cB$ = $\{2,3,\hdots,k\}$ , $k \in \{1,2,..,K\}$, the sum capacity is given by 
\begin{equation}\label{Sumrate_MIk0}
S = \frac{1}{2}\ \log\left(1+\frac{P_{1}}{1+\sum_{j=k+1}^{K}h_{j}^2P_{j}}\right) + \underset{i=2}{\overset{K}\sum}\frac{1}{2}\ \log(1+P_{i}).
\end{equation}
\end{theorem}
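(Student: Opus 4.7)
The plan is to prove Theorem \ref{MIK0Theorem} in two parts: an achievability argument via Corollary \ref{corollary_HK} and a matching converse via a genie-aided upper bound. For achievability, I would apply the simple HK scheme with $\cB = \{2, \ldots, k\}$. Corollary \ref{corollary_HK} then produces the $2^{k-1}$ sum rate inequalities in (\ref{eqn:corollary_changed_1}), one for each $\cM \subseteq \cB$. The claim is that condition (\ref{eqn:theorem_result_9}) is precisely the algebraic encoding of the statement that, among these bounds, the one at $\cM = \cB$ is the smallest. Concretely, I would compare RHS$(\cM = \cN)$ with RHS$(\cM = \cB)$ for each $\cN \subsetneq \cB$: after cancelling the common interference-free log terms in $\{k+1,\ldots,K\}$, combining the two remaining logarithms into a single log of a ratio, and exponentiating, the inequality RHS$(\cN) \geq$ RHS$(\cB)$ should reduce exactly to (\ref{eqn:theorem_result_9}) for the corresponding $\cN$. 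Substituting $\cM = \cB$ (so that $\cB - \cM = \emptyset$) into (\ref{eqn:corollary_changed_1}) then yields precisely the right-hand side of (\ref{Sumrate_MIk0}).

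For the converse, I would use a two-stage genie argument. First, endow receiver 1 with noise-free side information $G = (X_2^n, X_3^n, \ldots, X_k^n)$; giving extra information can only enlarge the capacity region. Receivers $2, \ldots, k$ already see interference-free point-to-point channels, so Fano's inequality immediately yields $R_i \leq \tfrac{1}{2}\log(1+P_i)$ for $i \in \{2,\ldots,k\}$. With $G$ in hand, receiver 1 can subtract $\sum_{i=2}^{k} h_i X_i^n$ from $Y_1^n$ and effectively observe $X_1^n + \sum_{j=k+1}^{K} h_j X_j^n + Z_1^n$. Together with the untouched receivers $k+1, \ldots, K$, this is a $(K-k+1)$-user many-to-one Gaussian IC over the user set $\{1, k+1, \ldots, K\}$ with cross gains $h_{k+1}, \ldots, h_K$. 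Condition (\ref{MIk0con}), namely $\sum_{j=k+1}^{K} h_j^2 \le 1$, is exactly the noisy-interference hypothesis of \cite{annvee09} for this residual sub-channel, so treating the remaining interference as noise is sum-rate optimal for the sub-channel. This gives $R_1 + \sum_{j=k+1}^{K} R_j \leq \tfrac{1}{2}\log\!\left(1 + \tfrac{P_1}{1+\sum_{j=k+1}^{K} h_j^2 P_j}\right) + \sum_{j=k+1}^{K}\tfrac{1}{2}\log(1+P_j)$. Adding this to the trivial bounds on $R_2, \ldots, R_k$ reproduces (\ref{Sumrate_MIk0}).

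The main obstacle is the algebraic verification inside the achievability step. There are $2^{k-1}-1$ separate comparisons, one for each $\cN \subsetneq \cB$, and each has to collapse cleanly to one instance of (\ref{eqn:theorem_result_9}). Careful bookkeeping is required, because the interference powers $h_i^2 P_i$ for $i \in \cB - \cN$ appear inside the log-ratio on the right-hand side of (\ref{eqn:corollary_changed_1}), while the terms $(1+P_i)$ for $i \in \cN$ appear as separate additive log contributions; these must combine in precisely the right way to produce the product form $\prod_{i \in \cB - \cN}(1+P_i)$ on the left of (\ref{eqn:theorem_result_9}). The converse step, by contrast, is conceptually clean: it reduces to a standard side-information genie that decouples the ``decoded'' interferers from the channel, followed by a direct appeal to the noisy-interference sum capacity result of \cite{annvee09} on the residual $(K-k+1)$-user sub-channel, whose noisy-interference condition is precisely (\ref{MIk0con}).
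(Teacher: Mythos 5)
Your proposal is correct and follows essentially the same route as the paper: achievability by showing that (\ref{eqn:theorem_result_9}) is exactly the condition for the $\cM = \cB$ inequality in (\ref{eqn:corollary_changed_1}) to be the dominant one, and a converse via the genie that hands $(X_2^n,\ldots,X_k^n)$ to receiver 1 (the paper's Fig.~\ref{fig:genie1}). The only difference is that the paper outsources the converse entirely to \cite[Thm. 7]{RangaP}, whereas you reconstruct it as a decoupling into $k-1$ point-to-point links plus a residual $(K-k+1)$-user many-to-one IC whose noisy-interference condition from \cite{annvee09} is precisely (\ref{MIk0con}).
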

\begin{proof}
The converse or upper bound has already been proved in \cite[Thm. 7]{RangaP} under the condition (\ref{MIk0con}) using the genie-aided channel in Fig. \ref{fig:genie1}. This sum rate can be achieved by the simple HK scheme if the inequality corresponding to the $\cM = \cB$ case is the dominant inequality in (\ref{eqn:corollary_changed_1}). This inequality is dominant if the conditions in (\ref{eqn:theorem_result_9}) are satisfied. 
\end{proof}
\begin{remark}
The case of $k = 1$ is taken to be $\cB = \phi$ resulting in condition (\ref{MIk0con}) alone, thereby recovering the sum capacity result for treating all interference as noise in \cite{annvee09}. 
\end{remark}
\begin{remark}
The achievability conditions in (\ref{eqn:theorem_result_9}) are less stringent than the achievability conditions in \cite{RangaP} since joint decoding in the simple HK scheme is better than the successive interference cancellation decoding used in \cite{RangaP}. This can be noted in Fig. \ref{fig1} where the region obtained using this theorem includes an additional shaded region for the case ${\cMI}3_0$ compared to the result in \cite{RangaP}.
\end{remark}

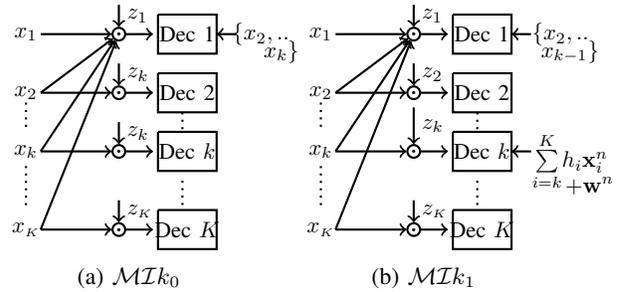
\begin{figure}[h!]
\begin{subfigure}[t]{0.2\textwidth}
\centering
\begin{tikzpicture}[scale=0.13,thick,auto,every node/.style={scale=0.85}]
\draw 
      node at (4,2) [sum, name=suma1] {\dott}
      node at (4,-4) [sum, name=suma2] {\dott}
      node at (4,-10) [sum, name=suma3] {\dott}
      node at (4,-18) [sum, name=suma4] {\dott};
      
\draw[->](-4,2) -- (suma1);
\draw[->](-4,-4) --(suma2);
\draw[->](-4,-10) -- (suma3);
\draw[->](-4,-18) --(suma4);
\draw[->](-4,-4) -- (suma1);
\draw[->](-4,-10) -- (suma1);
\draw[->](-4,-18) -- (suma1);

\draw[->] (4,5) -- node{$z_1$}(suma1);
\draw[->] (4,-1) -- node{$z_k$}(suma2);
\draw[->] (4,-7) -- node{$z_{k}$}(suma3);
\draw[->] (4,-15) -- node{$z_{\scriptscriptstyle{K}}$}(suma4);

\draw[->] (suma1)--(7.5,2) ; 
\draw[->] (suma2)--(7.5,-4) ;
\draw[->] (suma3)--(7.5,-10) ;
\draw[->] (suma4)--(7.5,-18) ;
\draw (8,4)--(14,4)--(14,0)--(8,0)--(8,4);
\draw (8,-2)--(14,-2)--(14,-6)--(8,-6)--(8,-2);
\draw (8,-8)--(14,-8)--(14,-12)--(8,-12)--(8,-8);
\draw (8,-16)--(14,-16)--(14,-20)--(8,-20)--(8,-16);
\draw[->] (16,2)--(14,2);

\draw 
  node at (-5.5,2) {$x_1$}
  node at (-5.5,-4) {$x_2$}
  node at (-5.5,-10) {$x_k$}
  node at (-5.5,-18) {$x_{\scriptscriptstyle{K}}$}
  node at (18.5,2){$\{x_{2},..$} 
  node at (19,0.25){$\ \ \ \ x_{k}\}$}
  node at (11,2){Dec 1}
  node at (11,-4){Dec 2}
  node at (11,-10){Dec $k$}
  node at (11,-18){Dec $K$};
\draw[-,dotted] (10.5,-13) -- (10.5,-15.5);
\draw[-,dotted] (-5.5,-11.5) -- (-5.5,-16);
\draw[-,dotted] (10.5,-6) -- (10.5,-8);
\draw[-,dotted] (-5.5,-5) -- (-5.5,-8);
\end{tikzpicture}
\caption{${\cMI}k_{0}$}
\label{fig:genie1}
\end{subfigure}\ \ \ \ \ 
\begin{subfigure}[t]{0.2\textwidth}
\centering
\begin{tikzpicture}[scale=0.13,thick,auto,every node/.style={scale=0.85}]
\draw 
      node at (4,2) [sum, name=suma1] {\dott}
      node at (4,-4) [sum, name=suma2] {\dott}
      node at (4,-10) [sum, name=suma3] {\dott}
      node at (4,-18) [sum, name=suma4] {\dott};   
\draw[->](-4,2) -- (suma1);
\draw[->](-4,-4) --(suma2);
\draw[->](-4,-10) -- (suma3);
\draw[->](-4,-18) --(suma4);
\draw[->](-4,-4) -- (suma1);
\draw[->](-4,-10) -- (suma1);
\draw[->](-4,-18) -- (suma1);
\draw[->] (4,5) -- node{$z_1$}(suma1);
\draw[->] (4,-1) -- node{$z_2$}(suma2);
\draw[->] (4,-5.5) -- node{$z_{k}$}(suma3);
\draw[->] (4,-15) -- node{$z_{\scriptscriptstyle{K}}$}(suma4);
\draw[->] (suma1)--(7.5,2) ; 
\draw[->] (suma2)--(7.5,-4) ;
\draw[->] (suma3)--(7.5,-10) ;
\draw[->] (suma4)--(7.5,-18) ;
\draw (8,4)--(14,4)--(14,0)--(8,0)--(8,4);
\draw (8,-2)--(14,-2)--(14,-6)--(8,-6)--(8,-2);
\draw (8,-8)--(14,-8)--(14,-12)--(8,-12)--(8,-8);
\draw (8,-16)--(14,-16)--(14,-20)--(8,-20)--(8,-16);
\draw[->] (16,2)--(14,2);
\draw[->] (16,-10)--(14,-10);
\draw 
  node at (-5.5,2) {$x_1$}
  node at (-5.5,-4) {$x_2$}
  node at (-5.5,-10) {$x_k$}
  node at (-5.5,-18) {$x_{\scriptscriptstyle{K}}$}
  node at (19,2){$\{x_{2},..$}
  node at (20,0.25){$x_{k-1}\}$}
  node at (20,-11){$\underset{i=k}{\overset{K}\sum} h_{i}\bx_{i}^n $}
  node at (22,-13.5){$+ \bw^n$}
  node at (11,2){Dec 1}
  node at (11,-4){Dec 2}
  node at (11,-10){Dec $k$}
  node at (11,-18){Dec $K$};
  
\draw[-,dotted] (10.5,-13) -- (10.5,-15.5);
\draw[-,dotted] (-5.5,-11.5) -- (-5.5,-16);
\draw[-,dotted] (10.5,-6) -- (10.5,-8);
\draw[-,dotted] (-5.5,-5) -- (-5.5,-8);
\end{tikzpicture}
\caption{${\cMI}k_{1}$}
\label{fig:genie2}
\end{subfigure}
\caption{Side Information for the genie-aided channels}
\end{figure}

{\em Case 2} (${\cMI} k_1$): Here we consider the case when the inequality corresponding to ${\cM} = {\cB} \backslash \{k\} = \{ 2, 3, \hdots, k-1 \}$ in (\ref{eqn:corollary_changed_1}) is the dominant inequality.

\begin{theorem}\label{Maintheorem}
For the K-user Gaussian many-to-one IC satisfying the following channel conditions:
\begin{IEEEeqnarray}{lcr}
\underset{i\notin \cB - \cN}\prod (1+P_{i})\left( 1+P_{1}+\underset{i=k+1}{\overset{K}\sum} h_{i}^2 P_{i}+\underset{i\in \cB-\cN}\sum h_{i}^2 P_{i}\right) \nonumber\\  
\geq \label{MIK1condn}\prod_{i=2,i\neq k}^{K} (1+P_{i})(1+P_{1}+\sum_{j=k}^{K}h_{j}^2 P_{j})\\
\forall {\cN \subseteq \cB}, {\cN} \neq \{2,3,..,k-1\}\ and\ {\cB} = \{2,3,...k\} \nonumber  \\	
\underset{i=k+1}{\overset{K}\sum} h_{i}^2 \leq 1- \rho^2, \mbox {~~} \rho h_k = 1 + \underset{i=k+1}{\overset{K}\sum}h_{i}^2 P_{i} \label{thm4condn}
\end{IEEEeqnarray}
the sum capacity is given by
\begin{IEEEeqnarray}{lcr}\label{sumcapcase2}
S = \underset{\underset{i\neq k}{i=2}}{\overset{K}\sum} 
\frac{1}{2} \log(1+P_{i}) + \frac{1}{2} \log\Vast(1+ \frac{P_{1}+h_{k}^2P_{k}}
{1+\underset{i=k+1}{\overset{K}\sum}h_{i}^2P_{i}}\Vast). \nonumber
\label{MIK1cap}
\end{IEEEeqnarray}
\end{theorem}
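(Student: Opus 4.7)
The theorem has two parts: achievability via the simple HK scheme, and a matching converse via a genie-aided channel. For achievability I would invoke Corollary~\ref{corollary_HK} with $\cB=\{2,\ldots,k\}$, which produces $2^{k-1}$ sum-rate inequalities indexed by $\cM\subseteq\cB$. The choice $\cM=\cB\setminus\{k\}=\{2,\ldots,k-1\}$, for which $\cB-\cM=\{k\}$, evaluates directly to the RHS of (\ref{sumcapcase2}). The conditions (\ref{MIK1condn}) — one for each $\cN\subseteq\cB$ with $\cN\neq\{2,\ldots,k-1\}$ — are exactly the statements that this particular $\cM$ gives the smallest, hence dominant, bound among the $2^{k-1}$ inequalities, so the rate in (\ref{sumcapcase2}) is achievable.

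For the converse I would use the genie-aided channel of Fig.~\ref{fig:genie2}: give receiver~1 side information $S_1^n=(X_2^n,\ldots,X_{k-1}^n)$, give receiver~$k$ side information $S_k^n=\sum_{i=k}^K h_i X_i^n+W^n$ with $W$ jointly Gaussian with $Z_1$ (the covariance being the degree of freedom to be tuned), and give no genie at the remaining receivers. Applying Fano's inequality and summing gives
\begin{align*}
n\sum_{i=1}^{K}R_i-n\epsilon_n &\leq I(X_1^n;Y_1^n,S_1^n)+I(X_k^n;Y_k^n,S_k^n)\\
&\quad+\sum_{\substack{i=2\\i\neq k}}^{K}I(X_i^n;Y_i^n).
\end{align*}
Using independence of the $X_i^n$'s, the first term reduces to $I(X_1^n;\tilde Y_1^n)$ with $\tilde Y_1^n=X_1^n+\sum_{j=k}^K h_j X_j^n+Z_1^n$, and the ``other receiver'' terms collapse to $h(Y_i^n)-\tfrac{n}{2}\log 2\pi e$, each bounded above by $\tfrac{n}{2}\log(1+P_i)$ since Gaussian inputs maximize output entropy under a power constraint.

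The crux is bounding the coupled pair $I(X_1^n;\tilde Y_1^n)+I(X_k^n;Y_k^n,S_k^n)$. Expanding,
\begin{align*}
I(X_1^n;\tilde Y_1^n) &= h(\tilde Y_1^n)-h\Bigl(\textstyle\sum_{j=k}^{K}h_j X_j^n+Z_1^n\Bigr),\\
I(X_k^n;Y_k^n,S_k^n) &= h(Y_k^n)-h(Z_k^n)+h(S_k^n\mid Y_k^n)\\
&\quad -h\Bigl(\textstyle\sum_{j=k+1}^{K}h_j X_j^n+W^n\Bigr).
\end{align*}
I would choose $\mathrm{Var}(W)$ and $\mathrm{Cov}(W,Z_1)$ (this is where $\rho$ and the condition (\ref{thm4condn}) enter) so that the ``awkward'' entropy terms $h(\sum_{j=k}^{K}h_j X_j^n+Z_1^n)$ and $h(S_k^n\mid Y_k^n)$ cancel — this is the Annapureddy--Veeravalli noisy-interference trick adapted so that only $X_k,\ldots,X_K$ are the residual interferers. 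The prescription $\rho h_k=1+\sum_{i=k+1}^{K}h_i^2 P_i$ aligns the conditional mean of $S_k$ given $Y_k$ with $h_k Y_k$, and $\sum_{i=k+1}^{K}h_i^2\leq 1-\rho^2$ is exactly the worst-case-noise condition that makes jointly Gaussian $X_i$'s extremize the remaining single-letter expression.

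After the cancellation, upper-bounding the remaining entropies by their Gaussian counterparts and regrouping gives
\[
R_1+R_k\leq \tfrac{1}{2}\log(1+P_k)+\tfrac{1}{2}\log\!\left(1+\frac{P_1+h_k^2P_k}{1+\sum_{i=k+1}^{K}h_i^2P_i}\right),
\]
and combining with the $\tfrac{1}{2}\log(1+P_i)$ bounds for $i\in\{2,\ldots,k-1\}\cup\{k+1,\ldots,K\}$ reproduces (\ref{sumcapcase2}). The main obstacle is verifying the cancellation step — specifically, showing that with the covariance of $W$ fixed by (\ref{thm4condn}), the difference $h(S_k^n\mid Y_k^n)-h(\sum_{j=k}^{K}h_j X_j^n+Z_1^n)$ is maximized by Gaussian $X_j$ (for $j\geq k+1$) and evaluates to the right constant; this is where all the structure of the condition $\sum_{i=k+1}^{K}h_i^2\leq 1-\rho^2$ is used, and a direct worst-case-noise / vector covariance computation in the spirit of \cite{annvee09} should close the gap.
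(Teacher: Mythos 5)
Your achievability argument is exactly the paper's: with ${\cB}=\{2,\ldots,k\}$, the conditions (\ref{MIK1condn}) make the $\cM={\cB}\setminus\{k\}$ inequality in (\ref{eqn:corollary_changed_1}) the dominant one, and that inequality evaluates to the claimed sum rate. The converse, however, has a genuine gap, and it is not merely the ``verification'' you defer at the end. First, you correlate the genie noise $W$ with $Z_1$, whereas the paper correlates it with $Z_k$, and this is essential: both appearances of $\rho$ in (\ref{thm4condn}) concern receiver $k$. The condition $\sum_{i>k}h_i^2\leq 1-\rho^2$ is used to apply the Gaussian-extremality lemma to $\sum_{i>k}h(Y_i^n)-h\bigl(\sum_{i>k}h_iX_i^n+W^n\mid Z_k^n\bigr)$, where $1-\rho^2$ is the conditional variance of $W$ given $Z_k$; and $\rho h_k=1+\sum_{i>k}h_i^2P_i$ is precisely the smart-genie condition that makes $x_{kG}\to s_{kG}\to y_{kG}$ a Markov chain. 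If $W$ is independent of $Z_k$, as in your construction, the conditional law of $S_k$ given $Y_k$ does not involve $\rho$ at all, so neither step is available. The only property of $W$ relative to $Z_1$ that is actually needed is equality of marginals, which yields the exact identity $h(S_k^n)=h(Y_1^n\mid S_1^n,X_1^n)$.

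Second, and independently, your accounting does not close. You expand $I(X_k^n;Y_k^n,S_k^n)$ in the order $(Y_k,S_k)$ and propose to ``cancel'' $h(S_k^n\mid Y_k^n)$ against $h\bigl(\sum_{j\geq k}h_jX_j^n+Z_1^n\bigr)=h(S_k^n)$; but $h(S_k^n\mid Y_k^n)-h(S_k^n)=-I(S_k^n;Y_k^n)\leq 0$, so this step discards information and leaves behind the full single-user term $h(Y_k^n)-h(Z_k^n)$. That is exactly the extra $\tfrac12\log(1+P_k)$ in your final bound $R_1+R_k\leq\tfrac12\log(1+P_k)+\tfrac12\log\bigl(1+\tfrac{P_1+h_k^2P_k}{1+\sum_{i>k}h_i^2P_i}\bigr)$, which exceeds what the theorem's sum rate allots to users $1$ and $k$ together; summed over all users your converse is strictly larger than the achievable rate, so it does not establish capacity. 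The paper instead expands in the order $(S_k,Y_k)$, i.e.\ $I(X_k^n;Y_k^n,S_k^n)=h(S_k^n)+h(Y_k^n\mid S_k^n)-h(Y_k^n,S_k^n\mid X_k^n)$, cancels $h(S_k^n)$ \emph{exactly} against $-h(Y_1^n\mid S_1^n,X_1^n)$, and then uses the Markov chain to write $I(x_{kG};y_{kG},s_{kG})=I(x_{kG};s_{kG})$, which merges with $I(x_{1G};y_{1G}\mid s_{1G})$ into the single term $I(x_{1G},x_{kG};y_{1G}\mid s_{1G})=\tfrac12\log\bigl(1+\tfrac{P_1+h_k^2P_k}{1+\sum_{i>k}h_i^2P_i}\bigr)$, with no separate $\tfrac12\log(1+P_k)$. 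The fix is therefore: correlate $W$ with $Z_k$, keep $h(Y_k^n\mid S_k^n)$ rather than $h(Y_k^n)$, and let the smart-genie Markov chain absorb user $k$'s rate into receiver 1's term.
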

\begin{proof}
  The sum rate $S$ in the theorem statement can be achieved by the simple HK scheme if the inequality corresponding to ${\cM} = {\cB} \backslash \{k\}$ is the dominant inequality in (\ref{eqn:corollary_result_1}). This inequality is dominant if (\ref{MIK1condn}) is satisfied.

For the converse or upper bound, we consider the genie-aided channel in Fig. \ref{fig:genie2}, where a genie provides the signal $\bs_{1}^n=\{\bx_{2}^n,\bx_{3}^n,...,\bx_{k-1}^n\}$ to receiver 1 and the signal $\bs_{k}^n=\underset{i=k}{\overset{K}\sum} h_{i}\bx_{i}^n + \bw^n$ to receiver $k$, where $\bw^n$ is i.i.d. $\mathcal{N}(0,1)$, and $w$ and $z_k$ are jointly Gaussian with $E[w z_k] = \rho$. Now, we have
\begin{IEEEeqnarray*}{lCr}
nS \leq I(\bx_{1}^n;\by_{1}^n|\bs_{1}^n)+\underset{i=2,i\neq k}{\overset{K}\sum}I(\bx_{i}^n;\by_{i}^n) + I(\bx_{k}^n;\by_{k}^n,\bs_{k}^n)\\
= h(\by_{1}^n|\bs_{1}^n) - h(\by_{1}^n|\bs_{1}^n, \bx_{1}^n) + \underset{i=2,i\neq k}{\overset{K}\sum}(h(\by_{i}^n)-h(\bz_{i}^n)) \\ + h\left(\bs_{k}^n\right) + h(\by_{k}^n|\bs_{k}^n) - h(\by_{k}^n,\bs_{k}^n|\bx_{k}^n)\\
\overset{(a)}\leq nh(y_{1G}|s_{1G}) - h(\by_{1}^n|\bs_{1}^n, \bx_{1}^n)
+ \underset{i=2}{\overset{k-1}\sum}(nh(y_{iG})\\-nh(z_{i})) +\underset{i=k+1}{\overset{K}\sum}(h(\by_{i}^n)-nh(z_{i})) 
 + h\left(\bs_{k}^n\right)\\ + nh(y_{kG}|s_{kG}) - h\left(\underset{i=k+1}{\overset{K}\sum} h_{i}\bx_{i}^n + \bw^n|\bz_{k}^n\right) - h(\bz_{k}^n)\\ 
 \overset{(b)}\leq nh(y_{1G}|s_{1G}) 
+ \underset{i=2}{\overset{k-1}\sum}(nh(y_{iG})-nh(z_{i})) \\+\underset{i=k+1}{\overset{K}\sum}(h(\by_{i}^n)-nh(z_{i})) 
 + nh(y_{kG}|s_{kG})\\ - h\left(\underset{i=k+1}{\overset{K}\sum} h_{i}\bx_{i}^n + \bw^n|\bz_{k}^n\right) - nh(z_{k})\\ 
\overset{(c)}\leq nh(y_{1G}|s_{1G})  + \underset{i=2}{\overset{k-1}\sum}(nh(y_{iG})-nh(z_{i})) \\+ \underset{i=k+1}{\overset{K}\sum}(nh(y_{iG})-nh(z_{i}))+ nh(y_{kG}|s_{kG})
\\- nh(\underset{i=k+1}{\overset{K}\sum} h_{i}x_{iG} + w|z_{k})   - nh(z_{k})
\\
\overset{(*)}= nh(y_{1G}|s_{1G}) + \underset{i=2,i\neq k}{\overset{K}\sum}nI(x_{iG};y_{iG}) +\\ nI(x_{kG};y_{kG,}s_{kG})\\
\overset{(d)}=nI(x_{1G};y_{1G}|s_{1G}) + \underset{i=2,i\neq k}{\overset{K}\sum}nI(x_{iG};y_{iG}) +\\ nI(x_{kG};s_{kG})\\
=nI(x_{1G},x_{kG};y_{1G}|s_{1G}) + \sum_{i=2,i\neq k}^{K}nI(x_{iG};y_{iG}),
\end{IEEEeqnarray*}
where $x_{iG} \sim {\cN}(0,P_{i})$, $s_{iG}$ and $y_{iG}$ represent the Gaussian side information and output that result when all the inputs are Gaussian as described in \cite{annvee09}, (a) follows from the fact that Gaussian inputs maximize differential entropy and $h(\by_{k}^n,\bs_{k}^n|\bx_{k}^n) = h(\bz_{k}^n) + h\left(\underset{i=k+1}{\overset{K}\sum} h_{i}\bx_{i}^n + \bw^n|\bz_{k}^n\right)$, (b) follows from $h(\by_{1}^n|\bs_{1}^n, \bx_{1}^n) = h\left(\bs_{k}^n\right)$, (c) follows from application of \cite[Lemma 2]{RangaP}  to $\underset{i=k+1}{\overset{K}\sum}h(\by_{i}^n)  - h(\underset{i=k+1}{\overset{K}\sum} h_{i}\bx_{i}^n + \bw^n|\bz_{k}^n)$ under (\ref{thm4condn}), and (d) follows from the fact that $x_{kG}\rightarrow s_{kG}\rightarrow y_{kG}$ forms a Markov Chain  \cite[Lemma 8]{annvee09} for our choice of $\rho$ in (\ref{thm4condn}).
\end{proof}

The results in Theorems \ref{MIK0Theorem} and \ref{Maintheorem} for the Gaussian $K$-user many-to-one IC are now listed in Table \ref{table:3x3manytoone} for the 3-user case. 
\begin{table}
\renewcommand{\arraystretch}{1.5}
\centering
\begin{tabular}{|p{1cm} | p{6cm} |} 
  \hline
  Strategy & Channel conditions \\ \hline
  ${\cMI}2_{1}$ &  (i) $h_2^2 \leq 1+P_1 + h_{3}^2P_{3},$ \\ 
&$h_3^2  \leq 1 - \left( \frac{1+h_{3}^2P_3}{h_2} \right)^2,h_{2}^2\geq 1$  \\ \cline{2-2}
   & (ii) $h_3^2 \leq 1+P_1 + h_{2}^2P_{2},$ \\ 
&$h_2^2  \leq 1 - \left( \frac{1+h_{2}^2P_2}{h_3} \right)^2,h_{3}^2\geq 1$  \\
  \hline
  ${\cMI}3_{0}$ &  $h_2^2 \geq 1+P_1 , h_3^2 \geq 1+P_1$ \\ 
&$h_{2}^2P_2 + h_{3}^2P_{3} \geq ((1+P_2)(1+P_3)-1)(1+P_{1})$  \\
  \hline
  ${\cMI}3_{1}$ &  (i) $h_2^2 \geq 1+P_1 + h_{3}^2P_{3},h_{3}^2 \leq 1+P_1,h_{3}^2\geq 1,$ \\ 
&$\frac{1+P_{3}}{1+P_2}\ \geq \frac{1+P_1+h_3^2P_3}{1+P_1+h_2^2P_2}$  \\ \cline{2-2}
   & ((i) $h_3^2 \geq 1+P_1 + h_{2}^2P_{2},h_{2}^2 \leq 1+P_1,h_{2}^2\geq 1,$ \\ 
&$\frac{1+P_{2}}{1+P_3} \geq \frac{1+P_1+h_2^2P_2}{1+P_1+h_3^2P_3}$  \\
  \hline
\end{tabular}
\caption{Channel conditions under which sum capacity is achieved using simple HK schemes in Theorems \ref{MIK0Theorem} and \ref{Maintheorem} for the 3-user Gaussian many-to-one IC. Conditions for ${\cMI}1_{0}$ and ${\cMI}2_{0}$ are already given in \cite{RangaP}. These conditions are plotted in Fig. \ref{fig1} for a given set of power constraints.}
\label{table:3x3manytoone}
\end{table}

\begin{remark} 
\label{Tutrem}
In \cite{Tun11}, only a successive decoding strategy where the desired signal is always decoded after decoding the interfering signals, is considered. However, jointly decoding the interfering signal and the desired signal (Scheme ${\cMI} k_1$) is required above to achieve capacity. Furthermore, unlike \cite{Tun11}, the conditions are obtained explicitly in terms of the channel parameters. For more detailed explanation, see Appendix \ref{TunReg}.
\end{remark}
\begin{remark}
\label{Namrem}
The outer bound in \cite[Theorem 2]{Nam15} for the 3-user case matches our outer bound only for ${\cMI} 2_1$. Our $K$-user upper bounds are tighter than the $K$-user upper bounds in  \cite{Nam15arx} for the many-to-one setting. Furthermore, the genie signal used in Theorem \ref{Maintheorem} is different from the genie signals considered in \cite{Nam15arx}. For more detailed explanation, see Appendix \ref{NamReg}.
\end{remark}
\begin{remark}
\label{tserem}
In \cite{rag7}, there is an example 3-user channel where the HK scheme does not achieve capacity, while a scheme based on interference alignment does. It can be verified that this 3-user example channel, when written in standard form, does not satisfy any of the conditions under which sum capacity is derived in this paper. For more detailed explanation, see Appendix \ref{tseReg}.   
\end{remark}

\subsection{Gaussian One-to-many IC}
Consider the simple HK scheme where interference from transmitter $K$ is decoded at $k$ receivers. Without loss of generality, we can consider the set these $k$ receivers to be ${\cI} = \{1,2,\hdots,k\}$ and ${\cJ}=\{k+1,k+2,\hdots,K-1\}$ (other choices can be easily handled by relabeling the receivers). We denote this scheme to be ${\cOI}_{k}$.
\begin{figure}[h!]
\centering
\begin{subfigure}[t]{0.2\textwidth}
\begin{tikzpicture}[scale=0.17,thick,auto,every node/.style={scale=0.85}]
\draw 
      node at (12,0) [sum, name=suma1] {\dott}
      node at (12,-5) [sum, name=suma2] {\dott}
      node at (12,-12) [sum, name=suma3] {\dott}
      node at (12,-17) [sum, name=suma4] {\dott};
\draw[->](5,0) -- (suma1);
\draw[->](5,-5) --(suma2);
\draw[->](5,-12) -- (suma3);
\draw[->](5,-17) -- (suma4);
\draw[->](5,-17) -- (suma1);
\draw[->](5,-17) -- (suma2);
\draw[->](5,-17) -- (suma3);
\draw[->] (12,3) -- node{$z_1$}(suma1);
\draw[->] (12,-2) -- node{$z_k$}(suma2);
\draw[->] (12,-8) -- node{$z_{k+1}$}(suma3);
\draw[->] (12,-14) -- node{$z_{\scriptscriptstyle{K}}$}(suma4);
\draw[->] (19.3,0)--(18.1,0);
\draw[->] (19.3,-5)--(18.1,-5);
\draw[->] (suma1)--(14,0); 
\draw[->] (suma2)--(14,-5);
\draw[->] (suma3)--(14,-12);
\draw[->] (suma4)--(14,-17);
\draw (14,1.1)--(18,1.1)--(18,-1.1)--(14,-1.1)--(14,1.1);
\draw (14,-3.9)--(18,-3.9)--(18,-6.1)--(14,-6.1)--(14,-3.9);
\draw (14,-10.9)--(20,-10.9)--(20,-13.1)--(14,-13.1)--(14,-10.9);
\draw (14,-15.9)--(18,-15.9)--(18,-18.1)--(14,-18.1)--(14,-15.9);
\draw  
  node at(20.5,0) {$x_{K}$}
  node at(20.5,-5) {$x_{K}$}
  node at (3.5,0) {$x_1$}
  node at (3.5,-6) {$x_k$}
  node at (3,-12) {$x_{k+1}$}
  node at (3.5,-17) {$x_{\scriptscriptstyle{K}}$}
  node at (16,0){$Dec_1$}
  node at (16,-5){$Dec_{k}$}
  node at (17,-12){$Dec_{k+1}$}
  node at (16,-17){$Dec_{K}$}
  node at(22.5,-20) {};	
\draw[-,dotted] (16.5,-13) -- (16.5,-15.5);
\draw[-,dotted] (3.5,-13) -- (3.5,-16);
\draw[-,dotted] (20.5,-1) -- (20.5,-5);
\draw[-,dotted] (3.5,-1) -- (3.5,-5);
\end{tikzpicture}
\caption{${\cOI}_{k}$}
\label{fig:genieOI1}
\end{subfigure}\ \ \ \
\begin{subfigure}[t]{0.2\textwidth}
\begin{tikzpicture}[scale=0.17,thick,auto,every node/.style={scale=0.85}]
\draw 
      node at (12,0) [sum, name=suma1] {\dott}
      node at (12,-5) [sum, name=suma2] {\dott}
      node at (12,-12) [sum, name=suma3] {\dott}
      node at (12,-17) [sum, name=suma4] {\dott};
\draw[->](5,0) -- (suma1);
\draw[->](5,-5) --(suma2);
\draw[->](5,-12) -- (suma3);
\draw[->](5,-17) -- (suma4);
\draw[->](5,-17) -- (suma1);
\draw[->](5,-17) -- (suma2);
\draw[->](5,-17) -- (suma3);
\draw[->] (suma1)--(14,0); 
\draw[->] (suma2)--(14,-5);
\draw[->] (suma3)--(14,-12);
\draw[->] (suma4)--(14,-17);
\draw[->] (19.3,-5)--(18.1,-5);
\draw[->] (21.3,-12)--(20.1,-12);
\draw[->] (12,3) -- node{$z_1$}(suma1);
\draw[->] (12,-2) -- node{$z_2$}(suma2);
\draw[->] (12,-9) -- node{$z_{K-1}$}(suma3);
\draw[->] (12,-14) -- node{$z_{\scriptscriptstyle{K}}$}(suma4);
\draw (14,1.1)--(18,1.1)--(18,-1.1)--(14,-1.1)--(14,1.1);
\draw (14,-3.9)--(18,-3.9)--(18,-6.1)--(14,-6.1)--(14,-3.9);
\draw (14,-10.9)--(20,-10.9)--(20,-13.1)--(14,-13.1)--(14,-10.9);
\draw (14,-15.9)--(18,-15.9)--(18,-18.1)--(14,-18.1)--(14,-15.9);
\draw  

  node at(20.7,-5) {$x_{K}$}
  node at(22.7,-12) {$x_{K}$}
  node at (3.5,0) {$x_1$}
  node at (3.5,-6) {$x_2$}
  node at (3,-12) {$x_{K-1}$}
  node at (3.5,-17) {$x_{\scriptscriptstyle{K}}$}
  node at (16,0){$Dec_1$}
  node at (16,-5){$Dec_2$}
  node at (17,-12){$Dec_{K-1}$}
  node at (16,-17){$Dec_{K}$};
\draw[-,dotted] (20.5,-6.5) -- (20.5,-10.5);
\draw[-,dotted] (3.5,-6.5) -- (3.5,-10.5);
\end{tikzpicture}
\caption{${\cOI}_{K-1_{1}}$}
\label{fig:genieOI2}
\end{subfigure}
\caption{Side information for the genie-aided channels}
\end{figure}
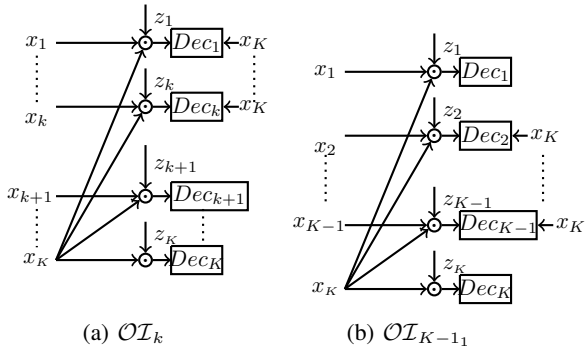

\begin{theorem}\label{One1}
For the $K$-user Gaussian one-to-many IC satisfying the following conditions:
\begin{IEEEeqnarray}{lcr}
1+P_{i} \leq |h_{i}|^2 \label{eqn:condition1} , 1\leq i\leq k, \\
\underset{j=k+1}{\overset{K-1}{\sum}} \frac{|h_{j}|^2P_{\scriptstyle{K}}+|h_{j}|^2}{|h_{j}|^2 P_{\scriptstyle{K}} +1} \leq 1 \label{eqn:condition2},
\end{IEEEeqnarray}
the sum capacity is given by
\begin{IEEEeqnarray}{lcr}
\nonumber S = \frac{1}{2}\ \underset{i=1}{\overset{k}{\sum}} \log(1+P_{i}) +\frac{1}{2}\ \log(1+P_{\scriptstyle{K}})+\\
\label{sum}+\frac{1}{2}\ \underset{j=k+1}{\overset{K-1}{\sum}} \log \left(1+\frac{P_{j}}{1+ |h_{j}|^2 P_{K}}\right). \label{oiksum}
\end{IEEEeqnarray}
\end{theorem}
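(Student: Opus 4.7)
The plan is to establish Theorem~\ref{One1} by pairing an achievability argument that specializes Corollary~\ref{simpleHKonetomany} with a matching genie-aided converse built on the side information in Fig.~\ref{fig:genieOI1}.

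For achievability, I would specialize Corollary~\ref{simpleHKonetomany} to $\cI = \{1,2,\ldots,k\}$ and $\cJ = \{k+1,\ldots,K-1\}$. A short algebraic check shows that condition (\ref{eqn:condition1}) renders the pairwise bound (\ref{e2}) inactive: $(1+P_i)(1+P_K) \leq 1+P_i+h_i^2P_K$ is equivalent to $1+P_i \leq h_i^2$. Consequently the corner point with $R_i = \tfrac{1}{2}\log(1+P_i)$ for $i \in \cI$, $R_j = \tfrac{1}{2}\log\bigl(1 + P_j/(1+h_j^2 P_K)\bigr)$ for $j \in \cJ$, and $R_K = \tfrac{1}{2}\log(1+P_K)$ lies in the region (\ref{one1})--(\ref{end1}), and its sum is exactly (\ref{oiksum}).

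For the converse, I would use the genie-aided channel of Fig.~\ref{fig:genieOI1}, in which $\bx_K^n$ is supplied to receivers $1,\ldots,k$. Applying Fano's inequality and the chain rule gives
\begin{equation*}
n(S-\epsilon_n) \leq \sum_{i=1}^{k} I(\bx_i^n;\by_i^n,\bx_K^n) + \sum_{j=k+1}^{K-1} I(\bx_j^n;\by_j^n) + I(\bx_K^n;\by_K^n).
\end{equation*}
Since $\bx_i^n$ is independent of $\bx_K^n$ and $\by_i^n - h_i \bx_K^n = \bx_i^n + \bz_i^n$, each of the first $k$ terms reduces to $I(\bx_i^n;\bx_i^n+\bz_i^n)$, which the Gaussian maximum-entropy bound caps at $\tfrac{n}{2}\log(1+P_i)$. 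This already matches the first group of terms in (\ref{oiksum}) exactly.

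What remains, $\sum_{j=k+1}^{K-1} I(\bx_j^n;\by_j^n) + I(\bx_K^n;\by_K^n)$, is precisely the sum-rate of the $(K-k)$-user Gaussian one-to-many sub-channel comprising transmitters and receivers $\{k+1,\ldots,K-1,K\}$, with transmitter $K$ still acting as the sole interferer; moreover hypothesis (\ref{eqn:condition2}) is exactly the noisy-interference sum-capacity condition for this smaller channel established in \cite{annvee09,RP10}. Invoking that result bounds the residual by $\tfrac{1}{2}\log(1+P_K) + \sum_{j=k+1}^{K-1}\tfrac{1}{2}\log\bigl(1 + P_j/(1+h_j^2 P_K)\bigr)$, closing the gap with achievability. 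The main obstacle is verifying that this noisy-interference converse can be invoked unchanged after decoupling the first $k$ users; if a direct citation is not entirely clean, I would reproduce the argument here by introducing correlated Gaussian side signals $\bw_j^n$ for $j \in \cJ$ whose covariance with $\bz_K$ is chosen so that $x_{KG}\to w_{jG}\to z_{jG}$ (the usual Markov chain in \cite{annvee09}), and then bounding the differential-entropy difference via the same lemma-style argument used in the proof of Theorem~\ref{Maintheorem}.
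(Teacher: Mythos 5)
Your proposal is correct and follows essentially the same route as the paper: achievability by specializing Corollary~\ref{simpleHKonetomany} to ${\cI}=\{1,\ldots,k\}$ and observing that (\ref{eqn:condition1}) makes (\ref{e2}) redundant, and a converse via the genie of Fig.~\ref{fig:genieOI1} that decouples the first $k$ users into point-to-point channels and bounds the residual $(K-k)$-user one-to-many sub-channel by the noisy-interference sum capacity of \cite{annvee09} under (\ref{eqn:condition2}). Your explicit algebraic check that (\ref{e2}) is inactive and your fallback plan for the residual bound are fine but not needed; the paper invokes the cited theorem directly.
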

\begin{proof}
  For achievabililty, consider the achievable rate region in Corollary \ref{simpleHKonetomany} for the simple HK scheme ${\cOI}_k$. Under (\ref{eqn:condition1}), constraint (\ref{e2}) is redundant. From the remaining constraints (\ref{one1}), (\ref{e1}), and (\ref{end1}), we get the achievable sum rate to be equal to the sum capacity in the theorem statement.

  For the converse, consider the genie-aided channel in Fig. \ref{fig:genieOI1}, where a genie provides $x_{\scriptscriptstyle{K}}$ to receivers 1 to $k$. The first $k$ receivers can now achieve the point-to-point channel capacities without any interference. The genie-aided channel can be considered to be a combination of these $k$ point-to-point channels and a Gaussian one-to-many IC with users $k+1$ to $K$ of the original channel. The sum capacity of the $k$ point-to-point channels corresponds to the first term in the right hand side of (\ref{oiksum}). The sum capacity of the Gaussian one-to-many IC with users $k+1$ to $K$ is upper bounded by the sum of the second and third terms in (\ref{oiksum}) under condition (\ref{eqn:condition2}) \cite[Thm. 5]{annvee09}. Thus, we have the required sum capacity result.
  
\end{proof}

Now, we consider the special case where ${\cI} = \{1, 2, \hdots, K - 1 \}$ and ${\cJ} = \phi$, i.e., the interference gets decoded at all receivers. For this special case, we now have a sum capacity result for conditions not included in Theorem \ref{One1}. We will denote this case ${\cOI}_{K-1_1}$.
\begin{theorem}\label{ontoman2}
For the $K$-user Gaussian one-to-many IC satisfying the following conditions:
\begin{IEEEeqnarray}{lcr}
1\leq h_l^2 \leq 1+P_{l} \label{cond1} \\
\frac{h_{l}^2}{1+P_{l}} \leq \frac{h_{i}^2}{1+P_{i}},  1\leq i\leq K-1 \mbox{~and~} i \ne l \label{cond2}
\end{IEEEeqnarray}
for any $l \in \{1, 2, \hdots, K-1 \}$, the sum capacity is 
\begin{equation}
S = \frac{1}{2}\underset{j=1, j \ne l}{\overset{K-1}\sum}\log (1+P_{j}) + \frac{1}{2} \log (1+P_{l} + h_{l}^2 P_{K}).
\label{RATE}
\end{equation}
\end{theorem}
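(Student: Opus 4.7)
My plan is to match the achievable sum rate of the simple HK scheme $\cOI_{K-1_1}$ (i.e., $\cI=\{1,\ldots,K-1\}$ and $\cJ=\emptyset$, so the interferer's codeword is decoded at every non-$K$ receiver) with the genie-aided upper bound of Fig.~\ref{fig:genieOI2}, in which $X_K^n$ is supplied as side information to every receiver in $\{1,\ldots,K-1\}\setminus\{l\}$.

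\textbf{Achievability.} Corollary~\ref{sumrate} supplies the two families of sum rate constraints (\ref{srom1}) and (\ref{next}). The $i=l$ instance of (\ref{next}) exactly matches (\ref{RATE}), so it suffices to verify that this instance is the dominant (smallest) bound under (\ref{cond1})--(\ref{cond2}). Comparing it with (\ref{srom1}) reduces to $(1+P_l)(1+P_K)\geq 1+P_l+h_l^2P_K$, i.e., $h_l^2\leq 1+P_l$, the upper half of (\ref{cond1}); comparing it with the $i\neq l$ instances of (\ref{next}) reduces, after cancellation, to $(1+P_l)h_i^2\geq (1+P_i)h_l^2$, which is exactly (\ref{cond2}).

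\textbf{Converse.} Each receiver $i\in\{1,\ldots,K-1\}\setminus\{l\}$ in the genie channel can subtract $h_iX_K^n$ and faces a clean AWGN link, giving $R_i\leq \frac{1}{2}\log(1+P_i)$. The remaining task is to prove
\begin{equation*}
R_l+R_K\leq \frac{1}{2}\log(1+P_l+h_l^2P_K),
\end{equation*}
and this is where the lower half $h_l^2\geq 1$ of (\ref{cond1}) enters. Since $1-1/h_l^2\geq 0$, one can pick $N\sim\mathcal{N}(0,1-1/h_l^2)$ independent of $Z_l$ with $Z_l/h_l+N\overset{d}{=}Z_K$, which makes $X_K^n\to(h_lX_K^n+Z_l^n)/h_l\to Y_K^n$ a Markov chain in distribution. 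The data processing inequality then gives $I(W_K;Y_K^n)\leq I(W_K;h_lX_K^n+Z_l^n)=I(W_K;Y_l^n\mid W_l)$, the last equality using $Y_l^n-X_l^n=h_lX_K^n+Z_l^n$ together with the independence of $W_l$ from $(W_K,X_K^n,Z_l^n)$. Combining Fano's inequality $nR_l\leq I(W_l;Y_l^n)+n\epsilon$ with this via the chain rule yields
\begin{equation*}
n(R_l+R_K)\leq I(W_l;Y_l^n)+I(W_K;Y_l^n\mid W_l)+n\epsilon=I(W_l,W_K;Y_l^n)+n\epsilon\leq \frac{n}{2}\log(1+P_l+h_l^2P_K)+n\epsilon,
\end{equation*}
the final step following from data processing $(W_l,W_K)\to(X_l^n,X_K^n)\to Y_l^n$ and the Gaussian maximum-entropy bound on $h(Y_l^n)$. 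Summing with the $K-2$ single-user bounds produces (\ref{RATE}).

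\textbf{Main obstacle.} The delicate point is extracting a MAC-like bound on $R_l+R_K$ from a single receiver despite receiver $l$ being required only to decode $W_l$. The condition $h_l^2\geq 1$ is used precisely to make $Y_K^n$ a stochastically degraded version of the observation receiver $l$ effectively has after subtracting its own codeword; this is what allows $I(W_K;Y_K^n)$ to be migrated into $I(W_K;Y_l^n\mid W_l)$ so that the chain rule can collapse the two Fano terms into the single-receiver quantity $I(W_l,W_K;Y_l^n)$.
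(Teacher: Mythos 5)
Your proof is correct and follows essentially the same route as the paper: the same dominance analysis of Corollary \ref{sumrate} for achievability (your two comparisons reduce exactly to $h_l^2\leq 1+P_l$ and condition (\ref{cond2})), and the same genie of Fig.~\ref{fig:genieOI2} splitting the channel into $K-2$ clean point-to-point links plus a two-user one-sided IC for the converse. The only difference is that where the paper cites \cite[Thm. 2]{Sas04} for the bound $R_l+R_K\leq \frac{1}{2}\log(1+P_l+h_l^2P_K)$ under $h_l^2\geq 1$, you rederive that result inline via the degradedness-plus-Fano argument, which is just the standard proof of the cited theorem.
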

\begin{proof}
  For achievability, consider the achievable sum rate in corollary \ref{sumrate}. The sum capacity in (\ref{RATE}) is the right-hand side of the inequality corresponding to $i = l$ in Corollary \ref{sumrate}. This inequality is the dominant inequality under conditions (\ref{cond1}) and (\ref{cond2}).
  
  For the converse, consider the genie-aided channel (shown in Fig. \ref{fig:genieOI2} for $l = 1$), where a genie provides $x_{\scriptscriptstyle{K}}$ to all receivers 1 to $K-1$ except receiver $l$. The genie-aided channel is a combination of $K-2$ point-to-point channels and a Gaussian one-sided IC with users $l$ and $K$ of the original channel. The sum capacity of the $K-2$ point-to-point channels corresponds to the first term in (\ref{RATE}). The sum capacity of the Gaussian one-sided IC with users $l$ and $K$ is upper bounded by the second term in (\ref{RATE}) under condition (\ref{cond1}) \cite[Thm. 2]{Sas04}. Thus, we have the required result.
  \end{proof}

The results in Theorems \ref{One1} and \ref{ontoman2} for the Gaussian $K$-user one-to-many IC are now listed in Table \ref{table:3x3channelcond} for the 3-user case. 

\begin{table}
\centering
\begin{tabular}{|r | l |} 
  \hline
  Strategy & Channel conditions \\ \hline
  $OI_0$ &  $\underset{j=1}{\overset{2}{\sum}} \frac{h_{j}^2P_{\scriptscriptstyle{K}}+h_{j}^2}{h_{j}^2 P_{\scriptscriptstyle{K}} +1} \leq 1$ \\ \hline
  $OI_1$ &  (i) $h_1^2 \geq 1+P_1, \ 
h_2^2  \leq 1$  \\ \cline{2-2}
   & (ii) $h_2^2 \geq 1+P_1, \
h_1^2  \leq 1$\\
  \hline 
  $OI_2$ & $h_1^2 \geq 1+P_1 ,\
h_2^2 \geq 1+P_2$ \\ 
  \hline
  $OI_{2_{1}}$ &  (i) $1\leq h_2^2 \leq 1+P_1,  
h_2^2  \geq \frac{1+P_2}{1+P_1}h_1^2$  \\ \cline{2-2}
   & (ii) $1\leq h_1^2 \leq 1+P_2,  
h_1^2  \geq \frac{1+P_1}{1+P_2}h_2^2$  \\
  \hline 
\end{tabular}
\caption{Channel conditions under which sum capacity is achieved using simple HK schemes in Theorems \ref{One1} and \ref{ontoman2} for the 3-user Gaussian one-to-many IC. These conditions are plotted in Fig. \ref{fig1} for a given set of power constraints.}
\label{table:3x3channelcond}
\end{table}
\section{Conclusions}
\vspace*{-1mm}
We derived new sum capacity results for the $K$-user Gaussian many-to-one and one-to-many ICs, for new classes of channel conditions (cases ${\cMI}k_{0}$, ${\cMI}k_{1}$, ${\cOI}_k$, ${\cOI}_{K-1_{1}}$). In all these cases, simple HK schemes with Gaussian signaling, no time-sharing and no common-private power splitting achieve sum capacity.
\bibliography{manytoonerefs}
\bibliographystyle{ieeetr}

\appendix
\subsection{Proof of Theorem \ref{HKtheorem}}
\label{RateMotz}
Let $S_i$, $i = 2, 3, \hdots, K$, denote the rates for private messages $W_{i1}$, $i = 2, 3, \hdots, K$, respectively. Let $T_i$, $i = 2, 3, \hdots, K$, denote the rates for common messages $W_{i0}$, $i = 2, 3, \hdots, K$, respectively. Note that $R_i = S_i + T_i$, $i = 2, 3, \hdots, K$. Using standard analysis of HK schemes, we get the following achievable rate region in terms of $\{R_i\}$ and $\{T_i\}$:
\begin{eqnarray}
\label{b4f3}R_{i} - T_{i} \leq I(X_{i};Y_{i}|Q, U_{i})\\
\label{b4f4}R_{i} \leq I(X_{i};Y_{i}|Q),
\end{eqnarray}
for $i = 2, 3, \hdots, K$, and
\begin{eqnarray}\label{b4f1}
R_{1} + \underset{i\in \cN}\sum T_{i} \leq I(U_{\cN},X_{1};Y_{1}|Q,U_{\cF-\cN})
\end{eqnarray}
for all possible ${\cN} \subseteq {\cF}$ and ${\cF} = \{2,3,\hdots,K\}$. We also add the trivial constraints
\begin{equation}
T_i\geq 0, T_i\leq R_i.
\end{equation}

The simplified rate region in (\ref{eqn:theorem_result_1}) and (\ref{eqn:theorem_result_2}) in terms of only the $R_i$'s can be obtained using Fourier-Motzkin elimination. The main steps of the Fourier-Motzkin elimination are provided below.

We eliminate the variables in the following sequence: $T_2, T_3, \hdots, T_K$.
After eliminating $T_2,T_3,\hdots,T_{k}$, the set of inequalities is given by:
\begin{IEEEeqnarray}{lcr}
\nonumber R_{1} + \underset{i \in \cN}\sum R_i + \underset{i\in \cS}\sum T_{i} \leq \underset{i\in \cN}\sum I(X_{j};Y_{j}|Q, U_{j}) \\
\nonumber +I(U_{\cN}, U_{\cS}, X_{1};Y_{1}|U_{\cF-(\cS\bigcup \cN)}, Q), \forall {\cN} \subseteq \{2,3,\hdots, k\}, \\
\nonumber {\cS} \subseteq \{k+1, \hdots, K\}.
\end{IEEEeqnarray} 
For $k+1\leq i \leq K$
\begin{IEEEeqnarray}{lcr}
\nonumber R_i-T_i \leq I(X_i;Y_i|Q, U_i),\\
\nonumber T_{i} \geq 0, T_{i} \leq R_{i}.
\end{IEEEeqnarray}
For $2\leq i \leq K$
\begin{IEEEeqnarray}{lcr}
\label{inductionres}R_{i}\leq I(X_{i};Y_{i}|Q).
\end{IEEEeqnarray}
This can be proved by induction.

Setting $k = K$, we get the required inequalities in (\ref{eqn:theorem_result_1}) and (\ref{eqn:theorem_result_2}) after elimination of $T_2, T_3, \hdots, T_K$.

\subsection{Proof of Corollary \ref{HKsumm21}}
\label{coroproof}
Corollary \ref{HKsumm21} is also proved using Fourier-Motzkin elimination starting from the result in Theorem \ref{HKtheorem}.

First, we substitute $R_1 = S - \sum_{i=2}^K R_i$. Then, we eliminate the variables in the following sequence: $R_2, R_3, \hdots, R_{K}$. After eliminating $R_2,R_3,\hdots,R_{k}$, the set of inequalities is given by:
\begin{IEEEeqnarray}{lcr}
\label{sec}
\nonumber S - \underset{i\in \cB}\sum R_{i} \leq \underset{i\in (\cS-\cB)}\sum I(X_{i};Y_{i}|Q, U_{i}) +\underset{i\in \cN}\sum I(X_{i};Y_{i}|Q, U_{i})\\
\nonumber \underset{i\in \cM-\cN}\sum I(X_{i};Y_{i}|Q) + I(U_{\cS-\cB}, U_{\cN}, X_1;Y_1|U_{\cM-\cN}, U_{\cB}, Q),
\end{IEEEeqnarray} 
$\forall \cB\subseteq \cS$ and ${\cS} = \{k+1,\hdots, K\}$ and $\forall \cN\subseteq \cM$ and ${\cM} = \{2,3, \hdots,k\}$, 
and for $k+1\leq i \leq K$
\begin{IEEEeqnarray}{lcr}
R_{i} \leq I(X_{i};Y_{i}|Q).
\end{IEEEeqnarray}
This can be proved by induction.

Setting $k = K$, we get the required result in (\ref{eqn:corollary result2}) after elimination of $R_2, R_3, \hdots, R_K$.

\subsection{Proof of theorem \ref{HKtheoremontotmany}}\label{proofth9}

Let $S$ denote the rate of the private message $W_{K1}$ and $T$ denote the rate of the common message $W_{K0}$. Note that $R_K = S + T$. Using standard analysis of HK schemes, we get the following achievable rate region in terms of $\{R_i\}$ and $\{T\}$:
\begin{IEEEeqnarray*}{lcr}
R_{i} \leq I(X_{i};Y_{i}|Q), i\in \cJ\\
R_{i} \leq I(X_{i};Y_{i}|U, Q), i \in \cI\\
R_{i} + T \leq I(X_{i}U;Y_{i}|Q), i \in \cI\\
R_K - T \leq I(X_{K};Y_{K}|U, Q)\\
R_K \leq I(X_{K};Y_{K}|Q)
\end{IEEEeqnarray*}
Using Fourier-Motzkin elimination to eliminate $T$, we get the rate region in Thoerem \ref{HKtheoremontotmany}.

\subsection{Proof of corollary \ref{sumrate} }\label{cor4proof}

Given $\cJ = \phi$, we get the following rate constraints:
\begin{IEEEeqnarray*}{lcr}
R_{i}\leq \frac{1}{2}\ \log(1+P_{i}), 1\leq i\leq K-1\\
R_{i} + R_{K} \leq \frac{1}{2}\ \log(1 + P_{i}+h_{i}^2P_{K}),\ 1\leq i\leq K-1\\
R_{K}\leq \frac{1}{2}\ \log(1+P_{K}).
\end{IEEEeqnarray*} 

First, we substitute $R_K = S - \sum_{i=1}^{K-1} R_i$. Then, we eliminate the variables in the following sequence: $R_1, R_2, \hdots, R_{K-1}$. After eliminating $R_1,R_2,\hdots,R_{k}$, the set of inequalities is given by:
\begin{IEEEeqnarray*}{lcr}
R_{i}\leq \frac{1}{2}\ \log(1+P_{i}), k+1\leq i\leq K-1\\
S - \underset{j=k+1}{\overset{K-1}\sum}R_{j} \leq \underset{j=1}{\overset{k}\sum}\frac{1}{2}\ \log(1+P_{j}) + \frac{1}{2}\ \log(1+P_{K})\\
S - \underset{j=k+1}{\overset{K-1}\sum}R_{j} \leq \underset{j=1,j\neq i}{\overset{k}\sum}\frac{1}{2}\ \log(1+P_{j}) +\\ \frac{1}{2}\ \log(1+P_{i}+h_{i}^2P_{K}), 1\leq i\leq k\\
S - \underset{j=k+1,j\neq i}{\overset{K-1}\sum}R_{j} \leq \underset{j=1}{\overset{k}\sum}\frac{1}{2}\ \log(1+P_{j}) + \frac{1}{2}\ \log(1+P_{i}+h_{i}^2P_{K}),\\ k+1\leq i\leq K-1
\end{IEEEeqnarray*}
This can be proved by induction.

Setting $k = K-1$, i.e., after elimination of $R_1, R_2, \hdots, R_{K-1}$, we get the required inequalities in (\ref{srom1}) and (\ref{next}). 

\subsection{Regarding Remark \ref{Tutrem}}
\label{TunReg}
In \cite[Theorem 2]{Tun11}, channel conditions under which sum capacity is achieved for a Z-like Gaussian interference channel are identified. A many-to-one channel can be naturally considered as a special case of this channel. This has been considered in \cite{Tun11} and by considering an appropriate channel matrix $H$ for the many-to-one IC, the channel conditions necessary for achieving sum-capacity for the ${\cMI} 1_0$ case (where all interference is treated as noise) were obtained in \cite[Example 1]{Tun11}. This result cannot be used to obtain the sum capacity results in this paper for any other simple HK scheme.

In \cite[Theorem 3]{Tun11}, the result is extended to a general K-user IC by using a "successive decoding strategy". The simple HK schemes we consider are more general than and include the successive decoding strategy considered in \cite{Tun11}. Furthermore, \cite[Theorem 3]{Tun11} does not give the conditions explicitly in terms of the channel parameters as done in this paper.

If we consider the general (not Z-like) case where only the receiver $k$ suffers interference in the many-to-one IC, the rate conditions from \cite[Theorem 3]{Tun11} are nothing but the HK achievable region when $U_{i} = X_{i}$ for $i \in \{2,3,\hdots k\}$ and $U_i = \phi$ for $i \in \{k+1,\hdots K\}$ in Theorem \ref{HKtheorem}. Following our simplification in this paper gives the constraints on sumrate given by (\ref{eqn:corollary_changed_1}) when $\cM = \cB$ = $\{2,3,...k\}$. Further simplification leads us to get the explicit conditions in terms of the channel paramters for case ${\cMI} k_{0}$ in (\ref{eqn:theorem_result_9}) to achieve sum-rate capacity. Results for the case ${\cMI} k_1$ cannot be obtained based on \cite[Theorem 3]{Tun11} because of the successive decoding strategy limitation.

\subsection{Regarding Remark \ref{Namrem}}
\label{NamReg}
For the 3-user many-to-one IC, the upper bound on sum capacity in \cite[Theorem 2]{Nam15} reduces to
\begin{IEEEeqnarray*}{lcr}
R_1 + R_2 + R_3 \leq I(X_{1G};Y_{1G}) + I(X_{2G};Y_{2G},S_{2G})\\ + I(X_{3G};Y_{3G}),
\end{IEEEeqnarray*}
if $h_{3}^2 \leq \sigma_{V_{N_{2}}}^2$, where $V_{N_{2}} = (N_{2}|Z_{2})$ and $E[Z_{2}N_{2}] = \rho_{N_{2}}$, where $S_{2G} = h_2 X_{2G} + h_3 X_{3G} + N_2$. This upper bound matches with the equation marked (*)  in the proof of Theorem \ref{Maintheorem}. Thus, for the ${\cMI} 2_1$ case, the upper bound in \cite[Theorem 2]{Nam15} and the upper bound in Theorem 4 in this paper match. However, for the other cases and for general $K$, we have a tighter bound for the many-to-one channel. In \cite{Nam15arx}, it is pointed out that \cite[Theorem2]{Nam15} cannot be easily extended to the $K$-user IC.     

\subsection{Regarding Remark \ref{tserem}}
\label{tseReg}
In \cite[Section II.B]{rag7}, the authors consider the following 3 user Gaussian many-to-one IC.
\begin{IEEEeqnarray*}{lcr}
y_1 = \beta \tilde  x_{1} + \beta \tilde x_2 + \beta \tilde x_{3} + z_1,\\
y_2 = \sqrt{\beta} \tilde x_{2} + z_2,\\
y_3 = \sqrt{\beta} \tilde x_{3} + z_3,
\end{IEEEeqnarray*}
where $z_{i} \sim \mathcal{N}(0,1) $ for each $i \in \{1,2,3\}$. The average power constraint at transmitter $i$ is $\tilde P_{i} = 1$ for $i = 1, 2, 3.$ 

This channel can be converted to standard form by defining
\begin{IEEEeqnarray*}{lcr}
x_1 = \beta. \tilde x_1 ,\ 
x_2 = \sqrt{\beta}. \tilde x_2  ,\ 
x_3 = \sqrt{\beta}. \tilde x_3  .
\end{IEEEeqnarray*}
The many-to-one channel in standard form (as in (\ref{m21model1}) and (\ref{m21model2})) is given by
\begin{IEEEeqnarray*}{lcr}
y_1 = x_{1} + \sqrt{\beta} x_2 + \sqrt{\beta} x_{3} + z_1\\
y_2 = x_{2} + z_2\\
y_3 = x_{3} + z_3,
\end{IEEEeqnarray*}
with power constraints
\begin{IEEEeqnarray*}{lcr}
P_{1} = \beta^2  \tilde P_{1} = \beta^2,\ 
P_2 = \beta \tilde P_{2}  = \beta,\ 
P_3 = \beta \tilde P_{3} = \beta,
\end{IEEEeqnarray*}
and $h_2 = h_3 = \sqrt{\beta}$. 

In \cite[Section II. B]{rag7}, the authors prove that for $\beta \geq 2$, the capacity cannot be achieved by any HK-type scheme. We can see that the above channel in standard form does not satisfy any of the conditions given in Table \ref{table:3x3manytoone} or the conditions for ${\cMI}1_{0}$ and ${\cMI}2_{0}$ given in \cite{RangaP}.
As an illustration, we explicitly see how they do not satisfy the conditions necessary for ${\cMI}3_{0}$. The other conditions can be checked similarly. To satisfy the conditions for ${\cMI} 3_0$, the following conditions must be satisfied.
\begin{IEEEeqnarray*}{lcr}
\beta \geq 1 + \beta^2\\
2 \beta^2 \geq ((1+\beta)^2-1)(1+\beta^2)  
\end{IEEEeqnarray*}
We can see that for $\beta \geq 2$ none of the above conditions are satisfied.
\end{document}